\newtheorem{theorem}{Theorem}[section]
\newtheorem{lemma}{Lemma}[section]
\newtheorem{assumption}{Assumption}[section]
\newtheorem{corollary}{Corollary}[section]
\newtheorem{definition}{Definition}
\newtheorem{remark}{Remark}[section]
\newenvironment{proof}[1][Proof]{\noindent \textbf{#1.} }{\  \rule{0.5em}{0.5em}}
\setlist[enumerate]{topsep=-2pt,itemsep=0pt,parsep=0pt,partopsep=0pt,label=(\roman*), align=left, leftmargin=*}
\begin{document}

\author{%
Timothy M. Christensen\thanks{%
Department of Economics, New York University, 19 W. 4th Street, 6th floor, New York, NY 10012, USA. E-mail address: \texttt{timothy.christensen@nyu.edu}
}
}

\title{%
Nonparametric identification of positive eigenfunctions\thanks{%
This paper is based on the third chapter of my doctoral dissertation at Yale. I am very grateful to my advisors Xiaohong Chen and Peter C.B. Phillips for their support, encouragement, and many valuable discussions. I would also like to thank a co-editor and two anonymous referees for helpful comments and suggestions. 
While writing this paper I was generously supported by a Carl Arvid Anderson Fellowship from the Cowles Foundation.
}
}

\date{First version October 2012; Revised December 2013, August 2014}

\maketitle

\begin{abstract} 
\noindent Important features of certain economic models may be revealed by studying positive eigenfunctions of appropriately chosen linear operators. Examples include long-run risk-return relationships in dynamic asset pricing models and components of marginal utility in external habit formation models. This paper provides identification conditions for positive eigenfunctions in nonparametric models. Identification is achieved if the operator satisfies two mild positivity conditions and a power compactness condition. Both existence and identification are achieved under a further non-degeneracy condition. The general results are applied to obtain new identification conditions for external habit formation models and for positive eigenfunctions of pricing operators in dynamic asset pricing models. 

\bigskip 

\noindent \textbf{Keywords:} Nonparametric identification, Nonparametric models, Asset pricing, Markov processes, Perron-Frobenius theory, Shape restrictions.

\bigskip

\noindent \textbf{JEL codes:} C13, C14, C58.
\end{abstract}

\pagenumbering{arabic}

\newpage
\section{Introduction}

Recent work in economics has shown that important features of certain economic models may be may be revealed by studying positive eigenfunctions of appropriately chosen linear operators. For example, \cite{HansenScheinkman2009} use a positive eigenfunction of pricing operators to decompose stochastic discount factors into permanent and transitory components. This decomposition presents a convenient device for studying long-run risk-return relationships (see also \cite{AlvarezJermann,Hansen2012,BackusChernovZin}). Moreover, \cite{HansenScheinkman2012} provide existence and identification conditions for the value function in models with recursive preferences by studying a related positive eigenfunction. \cite{Chenetal2012} show that an external habit formation function in a semiparametric consumption capital asset pricing model (C-CAPM) may be expressed as a positive eigenfunction of an operator related to the Euler equation. Further, \cite{Ross2011} recovers a probability distribution over future states implicit in option prices using a positive eigenfunction and its eigenvalue. 

The purpose of this paper is to provide a set of tractable conditions for nonparametric identification of positive eigenfunctions in economic models.\footnote{The conditions identify the positive eigenfunction up to scale (any positive multiple of a positive eigenfunction is a positive eigenfunction). If the positive eigenfunction is normalized to have unit norm, then the conditions are sufficient for the point identification of the normalized positive eigenfunction.} Previously, \cite{Chenetal2012} and \cite{Christensen2012} have provided identification conditions in the context of a semiparametric C-CAPM and a discrete-time implementation of the long-run analysis of \cite{HansenScheinkman2009}, respectively. Independently, \cite{LintonLewbelSrisuma2011} and \cite{EscancianoHoderlein} have studied identification of marginal utility of consumption in nonparametric Euler equations.
Each of these papers derives identification conditions by first restricting the parameter space to consist of positive functions belonging to a $L^p$ space and then applying results from integral operator theory. In contrast, \cite{HansenScheinkman2009} do not restrict the parameter space ex ante and instead apply Markov process theory to derive identification conditions for positive eigenfunctions of pricing operators in continuous-time models.  

Identification of positive eigenfunctions has been long studied in abstract functional analysis in the mathematics literature (see, e.g., \cite{KreinRutman,Schaefer1960}). The function-analytic identification conditions usually relate to properties of the resolvent of the operator. As such, these conditions are generally difficult to motivate and interpret in an economic context.

This paper aims to help bridge the gap between the high-level conditions in functional analysis and the primitive conditions for integral operators previously studied by providing a reasonably tractable set of identification conditions for the positive eigenfunction of a positive operator on a Banach lattice. When applied to the special case of operators on $L^2$ spaces, the identification conditions herein are weaker than those previously provided by \cite{Chenetal2012} and \cite{Christensen2012}.\footnote{It should be noted, however, that the identification conditions in \cite{Christensen2012} are a simple set of conditions that are also used to study estimation and long-run approximation. Similarly, \cite{Chenetal2012} sought to provide sufficient conditions that were similar to conditions imposed in the literature on estimation.} In particular, the identification conditions in this paper require a weaker form of compactness and are not cast in terms of restrictions on conditional probability densities. This generalization is important because compactness may fail and/or certain conditional densities may not exist in some applications. The general results presented in this paper are applied to derive new identification conditions for two models for which the previous identification conditions may be violated.

Additional properties of the positive eigenfunction and its eigenvalue are established under the identification conditions. These properties are useful for nonparametric estimation of the positive eigenfunction and its eigenvalue. In particular, if the identification conditions hold then the eigenvalue corresponding to the positive eigenfunction is the largest eigenvalue of the operator, and it is an isolated, simple eigenvalue. This final property may be used to establish continuity of the eigenvalue and eigenfunction with respect to perturbations of the operator (see, e.g., Chapter IV of \cite{Kato}). If an estimator of the operator is near the true operator in an appropriate sense then the maximum eigenvalue of the estimator, and its eigenfunction, should (under regularity) also be near the true eigenvalue/eigenfunction. \cite{Christensen2012} uses these properties to derive large-sample theory for nonparametric estimators of the positive eigenfunction and its eigenvalue.

The general conditions are applied to obtain new identification conditions for positive eigenfunctions of pricing operators in dynamic asset pricing models. Identification conditions are presented for stationary discrete- and continuous-time environments. Identification is shown to hold under a no-arbitrage condition, a weak dependence condition on the state process, and a power compactness condition. Existence and identification is established under an additional condition on the yield on zero-coupon bonds. These identification conditions complement those that \cite{HansenScheinkman2009} provide for possibly nonstationary, continuous-time environments. Identification conditions for a C-CAPM with external habit formation are also presented as an application.

In the applications dealt with in this paper, the eigenfunction equation can be interpreted as a conditional moment restriction. Identification of conditional moment restriction models has received much attention of late, particularly in the context of nonparametric models with endogeneity (see, e.g., \cite{SeveriniTripathi2006,dHaultfoeuille2011,Chenetal2012} and references therein). A conditional moment restriction model is identified when there is a unique solution to the conditional moment equation. In contrast, in this paper there are a continuum of solutions to the conditional moment equation when the positive eigenfunction is identified.\footnote{This is because an eigenfunction is only identified up to scale.} Further, here the identification results involve a crucial shape restriction, namely positivity, which is not typically imposed in the literature on identification of conditional moment restriction models.

The paper is organized as follows. Section \ref{specsec} reviews relevant concepts from spectral theory. Section \ref{sL2} provides identification conditions for positive operators on $L^p$ spaces. Section \ref{slrr} applies the conditions to pricing operators, and Section \ref{habit} to external habit formation models. Section \ref{sgen} presents identification conditions for operators on Banach lattices. An appendix contains additional background material on Banach lattices and all proofs.

\section{Review of some relevant definitions}\label{specsec}

Let $E$ be a Banach space and $T : E \to E$ be a bounded linear operator.\footnote{The definitions are from \cite{Schaefer1999} and assume $E$ is a Banach space over $\mathbb C$. When $E$ is a Banach space over $\mathbb R$, the complex extension $T(x+iy) = T(x) + iT(y)$ for $x,y \in E$ of $T$ is defined on the complexification $E + i E$ of $E$. The spectrum and associated quantities of $T$ when $E$ is a Banach space over $\mathbb R$ are obtained by applying the complex Banach space definitions to the complex extension of $T$.} The resolvent set $\rho(T) \subseteq \mathbb C$ of $T$ is the set of all $z \in \mathbb C$ for which the {resolvent operator} $R(T,z) := (T - zI)^{-1}$ exists as a bounded linear operator on $E$ (where $I:E \to E$ is the identity operator, i.e., $Ix = x$ for all $x \in E$). The spectrum $\sigma(T)$ is defined as the complement of $\rho(T)$ in $\mathbb C$, i.e. $\sigma(T) := (\mathbb C \setminus \rho(T))$. The point spectrum $\pi(T)\subseteq \sigma(T)$ of $T$ is the set of all $z \in \mathbb C$ for which the nullspace of $(T - zI)$ is nontrivial. Each $\lambda \in \pi(T)$ is an eigenvalue of $T$ and any nonzero $\psi$ in the nullspace of $(T - \lambda I)$ is an eigenvector of $T$ corresponding to $\lambda$. The geometric multiplicity of $\lambda \in \pi(T)$ is the dimension of the nullspace of $(T - \lambda I)$. The algebraic multiplicity of $\lambda \in \sigma(T)$ is the order of the pole of $R(T,z)$ at $z = \lambda$. If $\lambda \in \pi(T)$ has algebraic and geometric multiplicity equal to 1 then $\lambda$ is said to be \emph{simple}. Further, $\lambda \in \pi(T)$ is said to be \emph{isolated} if $\inf_{z \in \sigma(T) : z \neq \lambda} |z - \lambda| > 0$.
 The spectral radius $r(T)$ of $T$ is defined as $r(T) := \sup\{|\lambda| : \lambda \in \sigma(T)\}$.

\section{Identification in $L^p$ spaces}\label{sL2}

This section provides identification conditions for positive eigenfunctions when the parameter space consists of positive functions in a $L^p$ space. The first set of identification conditions presented are for $L^p$ spaces with $1 \leq p < \infty$. Identification conditions are also presented for the space of bounded functions, which may be of interest in certain applications. 

Let $(\mathcal X,\mathscr X,\mu)$ be a $\sigma$-finite measure space. To simplify notation, in what follows let $L^p$ denote the space $L^p(\mathcal X,\mathscr X,\mu)$.\footnote{The space $L^p(\mathcal X,\mathscr X,\mu)$ with $1 \leq p < \infty$ consists of all (equivalence classes of) measurable functions $f : \mathcal X \to \mathbb R$ such that $\int |f|^p \,\mathrm d\mu < \infty$. The space $L^\infty(\mathcal X,\mathscr X,\mu)$ consists of all (equivalence classes of) measurable functions $f: \mathcal X \to \mathbb R$ such that $\mathrm{ess} \sup| f| < \infty$.} Let $T : L^p \to L^p$ be a linear operator. $T$ is said to be \emph{positive} if $T f \geq 0$ a.e.-$[\mu]$ whenever $f \geq 0$ a.e.-$[\mu]$. 

The cases $1 \leq p < \infty$ and $p = \infty$ are dealt with separately because of the different properties of the $L^p$ ($1 \leq p < \infty$) and $L^\infty$ spaces.

\subsection{Case 1: $1 \leq p < \infty$}

The following conditions are sufficient for both existence and nonparametric identification of the positive eigenfunction of $T$ when the parameter space consists of positive functions in a $L^p$ space with $1 \leq p < \infty$. 

\begin{assumption}\label{aL2}
$T : L^p \to L^p$ is a bounded linear operator such that:\\[-20pt]
\begin{enumerate}
\item $T$ is positive,
\item for each $f \in L^p$ with $f \geq 0$ a.e.-$[\mu]$ and $f \neq 0$ there exists $n \geq 1$ such that $T^n f > 0$ a.e.-$[\mu]$, 
\item $T^n$ is compact for some $n \geq 1$, and 
\item $r(T) > 0$. 
\end{enumerate}
\end{assumption}

These conditions will be referred to as \emph{positivity}, \emph{eventual strong positivity}, \emph{power compactness}, and \emph{non-degeneracy}. The two {positivity} conditions are typically straightforward to motivate, either from the economic context of the problem or the structure of the operator. The power compactness condition is trivially satisfied if $T$ is compact, though this more general condition suffices. 
The non-degeneracy condition is satisfied if $T$ has a nonzero eigenvalue. Non-degeneracy is also satisfied if there exists a function $f \in L^p$ with $f \geq 0$ a.e.-$[\mu]$ and $f \neq 0$ such that $T^n f \geq \delta f$ a.e.-$[\mu]$ for some $n \geq 1$ and $\delta > 0$ \citep[Proposition 3]{Schaefer1960}. Both the non-degeneracy and eventual strong positivity conditions  have an economic interpretation in the asset pricing application below.

The following Theorem also shows that Assumption \ref{aL2} is sufficient for identification of the positive eigenfunctions of both $T$ and its adjoint $T^*$. To introduce the adjoint, let $q = \infty$ if $p = 1$, otherwise let $q$ be such that $q^{-1} + p^{-1} = 1$. The space $L^q$ can be identified as the dual space of $L^p$. The adjoint $T^*: L^q \to L^q$ of $T$ is a bounded linear operator defined by the equality
\begin{equation} \label{adjoint}
 \int g (Tf) \,\mathrm d\mu = \int (T^*g) f \,\mathrm d\mu
\end{equation}
for all $f \in L^p$ and $g \in L^q$.

\begin{theorem}\label{tL2}
Let Assumption \ref{aL2} hold. Then:\\[-20pt]
\begin{enumerate}
\item there exist $\bar f \in L^p$ with $\bar f > 0$ a.e.-$[\mu]$ and $\bar f^* \in L^q$ with $\bar f^* > 0$ a.e.-$[\mu]$ such that $T \bar f = r(T) \bar f$ and $T^* \bar f^* = r(T) \bar f^*$
\item $\bar f$ and $\bar f^*$ are the unique eigenfunctions of $T$ and $T^*$ in $L^p$ and $L^q$, respectively, that are non-negative a.e.-$[\mu]$
\item $r(T)$ is simple and isolated.
\end{enumerate}
\end{theorem}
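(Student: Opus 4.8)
The plan is to treat Theorem~\ref{tL2} as a Krein--Rutman / Perron--Frobenius statement: eventual strong positivity plays the role of irreducibility, power compactness makes $r(T)$ a Riesz point of $\sigma(T)$, and the whole argument is driven by a single object, a strictly positive eigenfunction $\bar f^{*}$ of the adjoint used throughout as a test functional. I would lean on the following standard facts: $r(T)\in\sigma(T)$ for a positive operator \citep{Schaefer1999}; $|Tf|\le T|f|$ for positive $T$; the adjoint of a compact operator is compact, so $(T^{*})^{n}=(T^{n})^{*}$ is compact and $r(T^{*})=r(T)$; the Riesz--Schauder theory for power-compact operators, by which $\sigma(T^{n})=\{\lambda^{n}:\lambda\in\sigma(T)\}$ (so $r(T^{n})=r(T)^{n}$) and every nonzero point of $\sigma(T)$ is isolated and a pole of $R(T,\cdot)$ of finite order with finite-dimensional spectral subspace; and Hölder's inequality.

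First I would establish existence. Since $T^{n}$ and $(T^{*})^{n}$ are positive, compact, with spectral radius $r(T)^{n}>0$, the Krein--Rutman theorem \citep{KreinRutman} gives non-negative nonzero $g\in L^{p}$, $g^{*}\in L^{q}$ with $T^{n}g=r(T)^{n}g$ and $(T^{*})^{n}g^{*}=r(T)^{n}g^{*}$; then $\bar f:=\sum_{k=0}^{n-1}r(T)^{-k}T^{k}g$ and $\bar f^{*}:=\sum_{k=0}^{n-1}r(T)^{-k}(T^{*})^{k}g^{*}$ are non-negative, nonzero, and satisfy $T\bar f=r(T)\bar f$, $T^{*}\bar f^{*}=r(T)\bar f^{*}$. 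Eventual strong positivity gives $m$ with $T^{m}\bar f>0$ a.e., and $T^{m}\bar f=r(T)^{m}\bar f$ forces $\bar f>0$ a.e. For $\bar f^{*}$: if $A:=\{\bar f^{*}=0\}$ had positive measure, choose (by $\sigma$-finiteness) $f\in L^{p}$ with $f>0$ on $A$ and $f=0$ off $A$; eventual strong positivity gives $N$ with $T^{N}f>0$ a.e., so $\int\bar f^{*}T^{N}f\,\mathrm d\mu>0$, contradicting $\int\bar f^{*}T^{N}f\,\mathrm d\mu=r(T)^{N}\int\bar f^{*}f\,\mathrm d\mu=r(T)^{N}\int_{A}\bar f^{*}f\,\mathrm d\mu=0$; hence $\bar f^{*}>0$ a.e. This gives (i); and since $r(T)\neq0$ and power compactness holds, $r(T)$ is isolated in $\sigma(T)$, which is the isolation part of (iii).

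Next I would handle uniqueness and simplicity in four short steps. (a) Any non-negative nonzero eigenfunction $\tilde f$ of $T$ has eigenvalue $r(T)$: its eigenvalue is positive (eventual strong positivity excludes $0$) and pairing $T\tilde f=\lambda\tilde f$ against $\bar f^{*}$ gives $\lambda=r(T)$ because $\int\bar f^{*}\tilde f\,\mathrm d\mu\in(0,\infty)$ by Hölder; symmetrically for $T^{*}$ (pair against $\bar f$). (b) Sign-definiteness: if $g$ is a real eigenfunction at $r(T)$ then $r(T)|g|=|Tg|\le T|g|$, and pairing $T|g|-r(T)|g|\ge0$ against $\bar f^{*}$ using $T^{*}\bar f^{*}=r(T)\bar f^{*}$ forces $T|g|=r(T)|g|$; adding and subtracting from $Tg=r(T)g$ gives $Tg_{+}=r(T)g_{+}$ and $Tg_{-}=r(T)g_{-}$, and if $g_{+}\neq0$ then $g_{+}=r(T)^{-m}T^{m}g_{+}>0$ a.e., which (disjoint supports) forces $g_{-}=0$, and symmetrically; so $g$ is sign-definite. (c) Geometric simplicity: for a real eigenfunction $h$ at $r(T)$, put $c:=\int\bar f^{*}h\,\mathrm d\mu\big/\int\bar f^{*}\bar f\,\mathrm d\mu$ (finite by Hölder, denominator positive); then $h-c\bar f$ is a real eigenfunction at $r(T)$ of zero $\bar f^{*}$-integral, hence sign-definite by (b) and therefore $0$, so $h=c\bar f$; taking real and imaginary parts, the $r(T)$-eigenspace of $T$ is one-dimensional, which with (a) is (ii) for $T$. (d) Algebraic simplicity: $r(T)$ is a pole of $R(T,\cdot)$ of finite order, so given (c) a length-$2$ Jordan chain would produce $w\in L^{p}$ with $(T-r(T))w=\bar f$; but pairing against $\bar f^{*}$ gives $0=\int\bar f^{*}(T-r(T))w\,\mathrm d\mu=\int\bar f^{*}\bar f\,\mathrm d\mu>0$, impossible. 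So $r(T)$ is simple, finishing (iii). Finally, the $r(T)$-spectral projections of $T$ and $T^{*}$ have equal (hence unit) rank, so the $r(T)$-eigenspace of $T^{*}$ is spanned by $\bar f^{*}$; with (a) this gives (ii) for $T^{*}$.

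The step I expect to be the main obstacle is producing $\bar f^{*}$ together with its strict positivity, and the sign-definiteness argument that hinges on it. Once $\bar f^{*}>0$ a.e.\ is in hand it does almost all the work — pinning eigenvalues, collapsing $T|g|-r(T)|g|\ge0$ to an equality, and killing Jordan chains — so the real content is concentrated in (i) the construction of $\bar f^{*}$ and the proof that it is a.e.\ positive, which is the one place where eventual strong positivity of $T$ (rather than of $T^{*}$, which need not hold) is genuinely used, via the test-function argument above; and (ii) the non-obvious move of passing from a real eigenfunction $g$ at $r(T)$ to $|g|$, showing $|g|$ is again an eigenfunction, and then decomposing $g=g_{+}-g_{-}$. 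After those, geometric and algebraic simplicity and uniqueness for both $T$ and $T^{*}$ follow by routine lattice and spectral bookkeeping.
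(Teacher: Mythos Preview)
Your proof is correct but takes a genuinely different route from the paper. The paper derives Theorem~\ref{tL2} as an immediate corollary of the abstract Banach-lattice result Theorem~\ref{tgen}, whose proof in turn black-boxes existence and simplicity into Schaefer's (1960) Theorem~2 (stated as Theorem~\ref{tgenhl}): one verifies that eventual strong positivity implies irreducibility and that power compactness plus $r(T)>0$ makes $r(T)$ a pole of the resolvent, then cites Schaefer for parts (i) and the simplicity in (iii); uniqueness (part (ii)) is handled by the short pairing argument in Corollary~\ref{cgenhl}, and isolation by Riesz--Schauder theory. You instead give a fully self-contained $L^p$ argument: Kre\u{\i}n--Rutman on the compact power $T^n$ followed by averaging to produce an eigenfunction of $T$ itself, the test-function argument to force $\bar f^{*}>0$ a.e., and then direct proofs of sign-definiteness, geometric simplicity, and the Jordan-chain obstruction --- all driven by pairing against $\bar f^{*}$. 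Your uniqueness step (a) is essentially the paper's Corollary~\ref{cgenhl}, but steps (b)--(d) unpack what the paper delegates to Schaefer. The paper's approach buys generality (it yields Theorem~\ref{tgen} for arbitrary Banach lattices in one stroke) and brevity; yours buys transparency --- it makes explicit exactly where each of the four hypotheses is consumed, and in particular isolates the one place eventual strong positivity of $T$ (not $T^{*}$) is genuinely needed, namely in establishing $\bar f^{*}>0$ a.e.\ via the indicator-function test.
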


Theorem \ref{tL2} is both an identification and existence result. Assumptions \ref{aL2}(i)(ii)(iii) are enough to establish identification, as made clear in the following Corollary.

\begin{corollary}\label{cL2}
Let Assumptions \ref{aL2}(i)(ii)(iii) hold and let $\bar f$ be a positive eigenfunction of $T$. Then Assumption \ref{aL2}(iv) is satisfied and the conclusions of Theorem \ref{tL2} hold.
\end{corollary}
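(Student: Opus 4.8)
The plan is to show that, under Assumptions \ref{aL2}(i)(ii)(iii), the mere existence of a positive eigenfunction $\bar f$ already forces Assumption \ref{aL2}(iv) to hold; once $r(T) > 0$ is established, the full Assumption \ref{aL2} is in force and the three conclusions follow immediately from Theorem \ref{tL2}. So the entire content of the Corollary reduces to verifying non-degeneracy.

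Let $\lambda$ denote the eigenvalue associated with $\bar f$, so $T\bar f = \lambda \bar f$ with $\bar f \in L^p$ and $\bar f > 0$ a.e.-$[\mu]$. First I would check that $\lambda$ is real and non-negative. Since $\bar f$ is a (real) element of $L^p$ and $T$ maps $L^p$ into $L^p$, the function $T\bar f = \lambda\bar f$ is real-valued, and because $\bar f \neq 0$ this forces $\lambda \in \mathbb R$. Positivity (Assumption \ref{aL2}(i)) applied to $\bar f \geq 0$ gives $T\bar f \geq 0$ a.e.-$[\mu]$, i.e.\ $\lambda \bar f \geq 0$ a.e.-$[\mu]$; since $\bar f > 0$ a.e.-$[\mu]$, this yields $\lambda \geq 0$.

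Next I would rule out $\lambda = 0$ using eventual strong positivity (Assumption \ref{aL2}(ii)). If $\lambda = 0$ then $T\bar f = 0$, hence $T^n \bar f = T^{n-1}(T\bar f) = 0$ for every $n \geq 1$. But Assumption \ref{aL2}(ii) applied to the non-negative nonzero function $\bar f$ produces some $n \geq 1$ with $T^n\bar f > 0$ a.e.-$[\mu]$, a contradiction. Therefore $\lambda > 0$. Since $\lambda$ is an eigenvalue it lies in $\sigma(T)$, so $r(T) \geq |\lambda| = \lambda > 0$, which is exactly Assumption \ref{aL2}(iv).

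With all four parts of Assumption \ref{aL2} now verified, Theorem \ref{tL2} applies and delivers conclusions (i)--(iii); in particular, by the uniqueness in Theorem \ref{tL2}(ii) the given $\bar f$ must be a positive scalar multiple of the eigenfunction $\bar f$ there, so in fact $\lambda = r(T)$. The argument is essentially immediate; the only step requiring any care is the exclusion of $\lambda = 0$, and even that is a one-line consequence of iterating $T$ and invoking eventual strong positivity. No use of power compactness is needed beyond its role inside Theorem \ref{tL2} itself.
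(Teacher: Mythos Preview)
Your proof is correct and follows essentially the same approach as the paper: use eventual strong positivity to show the eigenvalue is nonzero, hence $r(T)>0$, and then invoke Theorem \ref{tL2}. The paper is slightly more compressed---it applies Assumption \ref{aL2}(ii) directly to obtain $T^n\bar f = \lambda^n \bar f > 0$ a.e.-$[\mu]$ for some $n$, whence $|\lambda|>0$---whereas you first establish $\lambda\in\mathbb R$ and $\lambda\geq 0$ before excluding $\lambda=0$; but this is only a minor elaboration of the same idea.
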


\subsection{Case 2: $p = \infty$}

Identification conditions are now presented for the case in which the parameter space is the space $L^\infty$ of (essentially) bounded functions. There are two important differences from the $1 \leq p < \infty$ case. First, the eventual strong positivity condition needs to be strengthened. Second, identification of the adjoint eigenfunction is not considered because the dual of $L^\infty$ is typically identified with a space of signed measures. 

\begin{assumption}\label{aLinfty}
$T : L^\infty(\mu) \to L^\infty(\mu)$ is a bounded linear operator such that:\\[-20pt]
\begin{enumerate}
\item $T$ is positive,
\item for each $f \in L^\infty$ with $f \geq 0$ a.e.-$[\mu]$ and $f \neq 0$ there exists $n \geq 1$ such that $\mathrm{ess}\inf T^n f > 0$, 
\item $T^n$ is compact for some $n \geq 1$, and 
\item $r(T) > 0$. 
\end{enumerate}
\end{assumption}

As with the preceding case, positivity and non-degeneracy may be motivated by the economic context of the problem or the structure of the operator. 

\begin{theorem}\label{tLinfty}
Let Assumption \ref{aLinfty} hold. Then:\\[-20pt]
\begin{enumerate}
\item there exists $\bar f \in L^\infty$ with $\mathrm{ess}\inf \bar f > 0$ such that $T \bar f = r(T) \bar f$
\item $\bar f$ is the unique eigenfunction of $T$ in $L^\infty$ that is non-negative a.e.-$[\mu]$
\item $r(T)$ is simple and isolated.
\end{enumerate}
\end{theorem}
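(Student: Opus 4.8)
The plan is to follow the same overall structure used in the $1 \le p < \infty$ case, taking advantage of the fact that $L^\infty$ is a Banach lattice with order unit $\mathbf 1$ (so its positive cone has nonempty interior), while compensating for the failure of order continuity of $\|\cdot\|_\infty$. First I would establish existence. Put $S := T^n$ with $n$ as in Assumption \ref{aLinfty}(iii); by the spectral mapping theorem $r(S) = r(T)^n > 0$, and $S$ is compact and positive, so the Krein--Rutman theorem yields $\bar g \in L^\infty$ with $\bar g \ge 0$, $\bar g \ne 0$, and $S\bar g = r(T)^n \bar g$. Since $r(T) > 0$, the averaged function $\bar f := \sum_{j=0}^{n-1} r(T)^{-j} T^j \bar g$ is well defined in $L^\infty$, satisfies $\bar f \ge \bar g$ (hence $\bar f \ne 0$) by positivity of $T$, and a direct telescoping computation gives $T\bar f = r(T)\bar f$. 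Applying Assumption \ref{aLinfty}(ii) to $\bar f$ produces $m \ge 1$ with $\mathrm{ess}\inf T^m \bar f > 0$; as $T^m \bar f = r(T)^m \bar f$ and $r(T) > 0$, this gives $\mathrm{ess}\inf \bar f > 0$, which is conclusion (i).

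The key observation for the remaining parts is that any nonzero non-negative eigenfunction $h$ of $T$ at a positive eigenvalue has $\mathrm{ess}\inf h > 0$: Assumption \ref{aLinfty}(ii) gives $m$ with $\mathrm{ess}\inf T^m h > 0$, and $T^m h$ is a positive scalar times $h$. This plays the role that a strictly positive linear functional would play on $L^p$ with $p < \infty$. For uniqueness and geometric simplicity I would argue as in Perron--Frobenius theory. If $Tg = \lambda g$ with $g \ge 0$, $g \ne 0$ in $L^\infty$, then $\lambda > 0$ (if $\lambda = 0$ then $T^m g = 0$ for all $m$, contradicting Assumption \ref{aLinfty}(ii)), hence $\mathrm{ess}\inf g > 0$; setting $s := \sup\{t \ge 0 : g - t\bar f \ge 0\ \text{a.e.}\}$, which is positive because $\bar f$ is bounded and $\mathrm{ess}\inf g > 0$, and applying the positive operator $T$ to $g - s\bar f \ge 0$ gives $g \ge (s\, r(T)/\lambda)\bar f$, so $r(T) \le \lambda$; together with $\lambda \le r(T)$ (as $\lambda \in \sigma(T)$) this forces $\lambda = r(T)$. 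Then $g - s\bar f$ is a non-negative eigenfunction at $r(T)$ which, if nonzero, would have $\mathrm{ess}\inf > 0$, contradicting maximality of $s$; hence $g = s\bar f$. The same comparison, applied to $\bar f$ and to $\pm g$ and using that $\bar f$ is bounded and $\mathrm{ess}\inf \bar f > 0$, shows that every real eigenfunction of $T$ at $r(T)$, and hence (splitting into real and imaginary parts) every complex one, is a scalar multiple of $\bar f$. This yields conclusion (ii) and that $r(T)$ has geometric multiplicity one.

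For conclusion (iii) it remains to show $r(T)$ is isolated with algebraic multiplicity one. Isolatedness and finite-dimensionality of the generalized eigenspace $N := \bigcup_k \ker(T - r(T)I)^k$ follow from $S = T^n$ being compact (so $r(S) = r(T)^n \ne 0$ is an isolated eigenvalue of $S$ of finite algebraic multiplicity, with finite-rank Riesz projection) together with the spectral mapping theorem. If the algebraic multiplicity exceeded one there would be a real $g \in L^\infty$ with $(T - r(T)I)g = \bar f$, whence $T^k g = r(T)^k g + k\, r(T)^{k-1}\bar f$ for all $k \ge 1$. Since $\mathrm{ess}\inf \bar f > 0$ while $g$ is bounded, $-t\bar f \le g \le t\bar f$ a.e. for $t$ large, so positivity of $T^k$ and $T^k \bar f = r(T)^k \bar f$ give $-t\, r(T)^k \bar f \le T^k g \le t\, r(T)^k \bar f$ a.e.; dividing the identity above by $r(T)^{k-1}$ then yields $r(T) g + k\bar f \le t\, r(T)\bar f$ a.e., which is impossible once $k$ is large, since the left side has essential infimum tending to $+\infty$ while the right side is fixed. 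Hence $N = \ker(T - r(T)I)$ and $r(T)$ is simple.

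The main obstacle, and the reason Assumption \ref{aLinfty}(ii) is strengthened relative to Assumption \ref{aL2}(ii), is the lack of order continuity of the norm on $L^\infty$: one cannot run the comparison and growth-rate arguments against a strictly positive linear functional as one would on $L^p$ with $p < \infty$. The remedy is precisely that the strengthened positivity condition forces $\mathrm{ess}\inf \bar f > 0$, i.e. $\bar f$ lies in the interior of the positive cone, and this is what makes the order comparisons with arbitrary bounded functions and the linear-growth contradiction above go through.
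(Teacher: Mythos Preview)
Your proof is correct, but it takes a genuinely different route from the paper. The paper proves Theorem~\ref{tLinfty} in one line by invoking the general Banach-lattice result (Theorem~\ref{tgen}), which in turn is deduced from Schaefer's version of the Kre\u{\i}n--Rutman theorem (Theorem~\ref{tgenhl} and Corollary~\ref{cgenhl}): one checks that Assumption~\ref{aLinfty}(ii) makes $T$ irreducible in Schaefer's sense, that power compactness plus $r(T)>0$ forces $r(T)$ to be a pole of the resolvent, and then the abstract machinery delivers existence, uniqueness, simplicity, and isolation all at once.

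Your argument is instead a direct, self-contained Perron--Frobenius style proof that exploits the order-unit structure of $L^\infty$: Kre\u{\i}n--Rutman applied to the compact power $T^n$ plus the averaging trick $\sum_{j=0}^{n-1} r(T)^{-j}T^j$ for existence; comparison against $\bar f$ (using $\mathrm{ess}\inf \bar f>0$) for uniqueness and geometric simplicity; and the linear-growth contradiction along a would-be Jordan chain for algebraic simplicity. This approach is more elementary in that it avoids Schaefer's irreducibility theory and makes transparent exactly where the strengthened hypothesis \ref{aLinfty}(ii) enters (to force eigenfunctions into the interior of the cone so that order comparisons go through). The paper's approach, by contrast, gives a unified treatment of all the $L^p$ cases as instances of one abstract theorem and also yields the slightly stronger conclusion that $r(T)$ is the unique eigenvalue on the spectral circle $\{|z|=r(T)\}$, which your argument does not address (nor does the statement of Theorem~\ref{tLinfty} require it).
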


Theorem \ref{tLinfty} establishes both existence and identification of the positive eigenfunction. As with the $1 \leq p < \infty$ case, Assumption \ref{aLinfty}(i)(ii)(iii) are sufficient to establish identification.

\begin{corollary}\label{cLinfty}
Let Assumptions \ref{aLinfty}(i)(ii)(iii) hold and let $\bar f$ be a positive eigenfunction of $T$. Then Assumption \ref{aLinfty}(iv) is satisfied and the conclusions of Theorem \ref{tLinfty} hold.
\end{corollary}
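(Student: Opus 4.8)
The plan is to use the hypothesized positive eigenfunction to verify the non-degeneracy condition Assumption \ref{aLinfty}(iv) directly, and then to invoke Theorem \ref{tLinfty}, which supplies all of the stated conclusions.

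First I would write $T\bar f = \lambda\bar f$ for the eigenvalue $\lambda$ attached to $\bar f$. Because $T$ acts on the real space $L^\infty$ and $\bar f$ is a nonzero real function, $\lambda$ agrees with $(T\bar f)/\bar f$ on the positive-measure set where $\bar f > 0$, so $\lambda$ is real; and since $\bar f \geq 0$ a.e.-$[\mu]$ while $T\bar f = \lambda\bar f \geq 0$ a.e.-$[\mu]$ by positivity (Assumption \ref{aLinfty}(i)), we must have $\lambda \geq 0$.

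Next I would rule out $\lambda = 0$. Eventual strong positivity (Assumption \ref{aLinfty}(ii)) gives some $n \geq 1$ with $\mathrm{ess}\inf T^n\bar f > 0$. But $T^n\bar f = \lambda^n\bar f$, so $\lambda = 0$ would force $T^n\bar f = 0$, contradicting $\mathrm{ess}\inf T^n\bar f > 0$. Hence $\lambda > 0$, and since $\lambda \in \pi(T) \subseteq \sigma(T)$ we conclude $r(T) \geq |\lambda| = \lambda > 0$, which is precisely Assumption \ref{aLinfty}(iv).

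With all of Assumption \ref{aLinfty} now in force, Theorem \ref{tLinfty} applies and yields a positive eigenfunction with eigenvalue $r(T)$, uniqueness (up to a positive scalar multiple) of the non-negative eigenfunction in $L^\infty$, and the simplicity and isolation of $r(T)$. By the uniqueness clause, the given $\bar f$ is a positive scalar multiple of the eigenfunction produced by Theorem \ref{tLinfty}; in particular $\lambda = r(T)$ and $\mathrm{ess}\inf\bar f > 0$. There is no genuine obstacle in this argument — the substantive content lives entirely in Theorem \ref{tLinfty} — and the only point requiring a little care is extracting $\lambda \geq 0$ (i.e., checking that $\lambda$ is real and nonnegative rather than an arbitrary complex number) before exploiting eventual strong positivity to upgrade it to $\lambda > 0$.
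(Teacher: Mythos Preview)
Your proposal is correct and follows essentially the same route as the paper: use the positive eigenfunction together with eventual strong positivity to force the associated eigenvalue to be nonzero, deduce $r(T)>0$, and then invoke Theorem \ref{tLinfty}. The paper's version is even terser---it simply notes that $T^n\bar f=\lambda^n\bar f$ has $\mathrm{ess}\inf>0$ for some $n$, so $|\lambda|>0$ and $r(T)>0$, without pausing to argue that $\lambda$ is real and nonnegative.
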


\subsection{Discussion of closely related results}\label{comparison}

\cite{Chenetal2012} and \cite{Christensen2012} provide identification conditions for $L^2$ spaces in which the linear operator is of the form 
\begin{equation}
 T f(x) = \int k(x,y) f(y)\,\mathrm d \mu(y)\,.
\end{equation}
For identification, they assume that the integral kernel $k : \mathcal X \times \mathcal X \to \mathbb R$ satisfies:\\[-20pt]
\emph{\begin{enumerate}
\item[(a)] $k(x,y) > 0$ a.e.-$[\mu \otimes \mu]$, and 
\item[(b)] $\int \int k(x,y)^2 \,\mathrm d\mu(x) \,\mathrm d \mu(y) < \infty$.
\end{enumerate}}
Condition (a) guarantees positivity and eventual strong positivity (Assumption \ref{aL2}(i)(ii)). Condition (b) implies $T$ is compact and therefore power compact (Assumption \ref{aL2}(iii)).\footnote{\cite{Christensen2012} also establishes identification and existence of positive eigenfunctions of integral operators on $L^p$ spaces under a weaker power compactness condition than (b).} Conditions (a) and (b) also imply $r(T) > 0$ (Assumption \ref{aL2}(iv); see Theorem V.6.6 of \cite{Schaefer1974}). Therefore, conditions (a) and (b) are sufficient for both existence and identification of the positive eigenfunction in the space $L^2$.

Identification using conditions (a) and (b) may be problematic for certain models including asset pricing models in higher-order Markov environments (Section \ref{AR(p)}) and external habit formation models where the habit component is a function of consumption growth over two or more periods (Section \ref{habit}). First, certain conditional densities may fail to exist, making conditions (a) and (b) less interpretable as restrictions on probability densities. Second, the operator may not be compact, thereby violating condition (b) (see Lemma \ref{noncompact}). Theorem \ref{tL2} is applied to provide new identification conditions in these applications (see Sections \ref{AR(p)} and \ref{habit}).

\section{Application: dynamic asset pricing models and the long run}\label{slrr}

\cite{HansenScheinkman2009} (HS hereafter) present a framework for decomposing stochastic discount factors (SDFs) in dynamic asset pricing models into permanent and transitory components (see also \cite{AlvarezJermann,Hansen2012,BackusChernovZin}).  This decomposition presents a convenient device for extracting information about long-run risk-return relationships, even in complicated nonlinear environments where extrapolation to the long run might otherwise be difficult. The key tools for performing their analysis are a positive eigenfunction and eigenvalue of a collection of pricing operators.

HS derive identification conditions for the positive eigenfunction in general continuous-time environments using stochastic stability results for continuous-time Markov processes. They show that the positive eigenfunction compatible with their stochastic stability conditions must be unique, but they do not rule out the existence of other positive eigenfunctions. 

Theorem \ref{tL2} is now applied to derive new identification conditions for the positive eigenfunction in discrete- and continuous-time environments. Some of the identification conditions are stronger than those in HS. For instance, here a power compactness condition is imposed and the environment is assumed to be stationary, though stationarity is not necessarily restrictive in an asset pricing context.\footnote{Stationarity may sometimes be achieved by a change of variables. For example, consumption-based asset pricing models are often written in terms of consumption growth to avoid potential nonstationarity in aggregate consumption \citep{HansenSingleton,GallantTauchen}).} Under stationarity, the parameter space may be restricted ex ante to consist of positive functions belonging to an appropriately chosen $L^p$ space. This restriction, together with the identification conditions presented below, both guarantees uniqueness of the positive eigenfunction and shows how its eigenvalue is related to the spectrum of the pricing operators. Conditions for existence of the positive eigenfunction in discrete-time environments are also presented. 

\subsection{Identification in discrete-time environments}

The discrete-time environment is first described, following \cite{HansenScheinkman2012,HansenScheinkman2013}. Let $\{(X_t,Y_t)\}_{t=-\infty}^\infty$ denote a strictly stationary, discrete-time, first-order Markov process with support $\mathcal X \times \mathcal Y \subseteq \mathbb R^{d_x} \times \mathbb R^{d_y}$. Assume further that the joint distribution of $(X_{t+1},Y_{t+1})$ conditioned on $(X_t,Y_t)$ depends only on $X_t$. The vector $X_t$ will summarize all the relevant information for pricing at date $t$, and will be referred to as the state vector. 

To describe the pricing operators, let $\psi : \mathcal X \to \mathbb R$ be a measurable function. Assume that the date-$t$ price of a claim to $\psi(X_{t+1})$ payable at time $t+1$ is given by the Euler or no-arbitrage equation
\begin{equation} \label{euler}
 T \psi(x) = \mathbb E\left[m(X_t,X_{t+1},Y_{t+1}) \psi(X_{t+1})|X_t = x\right]
\end{equation}
where $m : \mathcal X \times \mathcal X \times \mathcal Y \to \mathbb R$ is a measurable function, the random variable $m(X_t,X_{t+1},Y_{t+1})$ is the SDF for pricing single-period claims at date $t$ and $m(X_t,X_{t+1},Y_{t+1}) \geq 0$ a.e.-$[\overline Q]$ where $\overline Q$ is the unconditional distribution of $(X_t,X_{t+1},Y_{t+1})$.\footnote{The joint process $(X_t,Y_t)$ is used following \cite{HansenScheinkman2012,HansenScheinkman2013}. This general framework clearly nests environments in which the state process is $\{X_t\}$ and the SDF is $m(X_t,X_{t+1})$.}  Equation (\ref{euler}) shows that the pricing operator $T$ is a linear operator on the space of payoff functions $\psi$. The $n$ period pricing operator $T_n$ may be similarly defined for $n \geq 1$ as the operator that assigns date-$t$ prices to claims to $\psi(X_{t+n})$ payable at time $t+n$. For instance, the date $t$ price of a zero-coupon bond which matures in $n$ periods is given by $T_n \psi(x)$ when $\psi(x) = 1$ for all $x \in \mathcal X$. Assuming trading at intermediate dates and time homogeneity of $m$, it follows that $T_n$ takes the form
\begin{equation} \label{euler2}
 T_n \psi(x) = \mathbb E\left[\left. \left( \prod_{s = 0}^{n-1} m(X_{t+s},X_{t+s+1},Y_{t+s+1}) \right) \psi(X_{t+n})\right|X_t = x\right]\,.
\end{equation}
The structure placed on the SDF and the state process implies that each $T_n$ factorizes as $T_n = T^n$, i.e., $T_n \psi$ is obtained by iteratively applying $T$ to $\psi$ for $n$ times.

An example of this setup is the representative-agent C-CAPM, in which 
\begin{equation} \label{C-CAPM}
 m(X_t,X_{t+1},Y_{t+1}) = \beta ( C_{t+1}/C_t )^{-\gamma}
\end{equation}
where $C_t$ is aggregate consumption at time $t$ and $C_{t+1}/C_t = g(X_t,X_{t+1},Y_{t+1})$ for some measurable function $g$. This specification offers a good deal of flexibility for modeling consumption growth, including $X_t = (G_t,Z_t)$ where $Z_t$ is a vector of variables such that $(G_t,Z_t')'$ characterizes all the relevant information for the evolution of the state from $t$ to $t+1$. For example, if $G_t$ evolves as a $\ell$th order Markov process then $Z_t = (G_{t-1},\ldots,G_{t-\ell+1})'$.

As the state process is assumed stationary, the following identification conditions are presented for the space $L^p(\mathcal X,\mathscr X,Q)$ for some $1 \leq p < \infty$, where $\mathscr X$ is the Borel $\sigma$-algebra on $\mathcal X$ and $Q$ is the stationary distribution of $X$. To simplify notation, let $L^p$ denote the space $L^p(\mathcal X,\mathscr X,Q)$.

\begin{definition} \label{principal eigenfunction definition}
$\phi \in L^p$ is a \emph{principal eigenfunction} of the collection of pricing operators $\{T_n : n \geq 1\}$ with eigenvalue $\rho$ if $\phi > 0$ a.e.-$[Q]$ and $T_n \phi = \rho^n \phi$ for each $n \geq 1$.
\end{definition}

\begin{remark} \label{equiv remark}
Let $T \phi = \rho \phi$. The factorization $T_n = T^n$ implies that $T_n \phi = \rho^n \phi$. Therefore, $\phi$ is a principal eigenfunction of $\{T_n : n \geq 1\}$ if and only if $\phi$ is a positive eigenfunction of $T$.
\end{remark}

HS show that $\rho$ and $\phi$ characterize the pricing of long-horizon assets. For example, under stochastic stability conditions, HS obtain the limiting result
\begin{equation} \label{limiting}
 \lim_{n \to \infty} \rho^{-n} T_n \psi(x) = \widetilde{ \mathbb E}[\psi(X)/\phi(X)] \phi(x)
\end{equation}
where $\widetilde {\mathbb E}[\cdot]$ is an expectation under a change of measure induced by the permanent component of the SDF. Therefore, $\rho$ encodes the yield on long-term claims to state-contingent payoffs and $\phi$ captures state-dependence of prices of such claims. \cite{Christensen2012} shows that, in stationary environments, the expectation $\widetilde{\mathbb E}[ \cdot]$ may be expressed in terms of $Q$, $\phi$, and the positive eigenfunction $\phi^*$ of the adjoint of $T$. The pair $\rho$ and $\phi$ may also be used to decompose the SDF into the product of its permanent and transitory components, and to place bounds on the size of the permanent and transitory components (see \cite{AlvarezJermann}; HS; \cite{Hansen2012,BackusChernovZin}). Uniqueness of the positive eigenfunction is clearly essential to the analysis of HS and others.

Identification conditions for discrete-time environments are now derived using Theorem \ref{tL2}.

\begin{assumption}\label{lrr cpt} $\{T_n : n \geq 1\}$ satisfies the following:\\[-20pt]
\begin{enumerate}
\item $T : L^p \to L^p$ is a bounded linear operator
\item $T_n$ is compact for some $n \geq 1$.
\end{enumerate}
\end{assumption}

It remains to verify the remaining conditions of Theorem \ref{tL2}. Clearly $T : L^p \to L^p$ is positive, as it is the composition of two positive operators, namely the multiplication operator $M_m$ given by $M_m \psi = m \psi$ where $m \geq 0$ a.e.-$[\overline Q]$ and the conditional expectation operator $\mathbb E [\cdot|X_t = x]$.

The following weak dependence and no-arbitrage conditions are sufficient for eventual strong positivity.

\begin{assumption}\label{irred}
For any $S \in \mathscr X$ with $Q(S) > 0$ there exists an integer $n = n(S) \geq 1$ such that $\Pr(X_{t+n} \in S | X_t = x) > 0$
a.e.-$[Q]$.
\end{assumption}

Assumption \ref{irred} is similar to the irreducibility condition in HS, though they require irreducibility under a change of measure induced by the permanent component. If both the stationary density of $X_t$ and the conditional density of $X_{t+1}$ given $X_t$ exist and are strictly positive then Assumption \ref{irred} is trivially satisfied with $n = 1$. However, the conditional density will fail to exist if the state process $X_t$ is formed by stacking a higher-order Markov process as a first-order Markov process, in which case the extra generality of Assumption \ref{irred} is useful (see Section \ref{AR(p)}).

The principle of no-arbitrage asserts that any claim to a non-negative payoff which is positive with positive conditional probability must command a positive price (see, e.g., \cite{HansenRenault} and references therein). It is helpful to think of the collection 
\begin{equation}
 \Psi = \{ \psi \in L^p :  \psi \geq 0\mbox{ a.e.-}[Q], \psi \neq 0\}
\end{equation}
as a menu of claims to future state-contingent payoffs. That is, at date $t$ each pair $(\psi,n) \in \Psi \times \mathbb N$ represents a claim to $\psi(X_{t+n})$ payable at time $t+n$. The following no-arbitrage type condition is imposed on (\ref{euler2}).

\begin{assumption}\label{noarb}
For each $(\psi,n) \in \Psi \times \mathbb N$, $\mathbb E[\psi(X_{t+n}) |X_t = x] > 0$ a.e.-$[Q]$ implies $T_n \psi(x) > 0$ a.e.-$[Q]$.
\end{assumption}

\begin{lemma}\label{suff conds}
If Assumptions \ref{irred} and \ref{noarb} hold then $T$ satisfies eventual strong positivity.
\end{lemma}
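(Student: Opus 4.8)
The plan is to reduce eventual strong positivity to the no-arbitrage condition by showing that, for any nonzero $\psi \in \Psi$, some iterate $X_{t+n}$ assigns enough conditional mass to the region where $\psi$ is bounded away from zero. First I would recall that the factorization $T_n = T^n$ makes the conclusion ``$T^n f > 0$ a.e.-$[Q]$ for some $n \geq 1$'' equivalent to ``$T_n f > 0$ a.e.-$[Q]$ for some $n \geq 1$''; and since $T$ is positive it suffices to take $f = \psi$ ranging over $\Psi$, i.e. $\psi \geq 0$ a.e.-$[Q]$ and $\psi \neq 0$.

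Fix such a $\psi$. The key step is a truncation: for $k \geq 1$ let $S_k = \{x \in \mathcal X : \psi(x) > 1/k\}$. The sets $S_k$ increase to $\{\psi > 0\}$, which has strictly positive $Q$-measure because $\psi \neq 0$ in $L^p(Q)$; hence $Q(S_k) > 0$ for some $k$. By Assumption \ref{irred} applied to $S = S_k$ there is an integer $n = n(S_k) \geq 1$, not depending on $x$, with $\Pr(X_{t+n} \in S_k \mid X_t = x) > 0$ a.e.-$[Q]$. Since $\psi(X_{t+n}) > 1/k$ on the event $\{X_{t+n} \in S_k\}$ and $\psi \geq 0$, monotonicity of conditional expectation gives
\[
 \mathbb E[\psi(X_{t+n}) \mid X_t = x] \; \geq \; \tfrac{1}{k}\,\Pr(X_{t+n} \in S_k \mid X_t = x) \; > \; 0 \quad \text{a.e.-}[Q].
\]
(That these conditional expectations are well defined and finite a.e. follows from stationarity, which gives $\psi(X_{t+n}) \in L^p(Q) \subseteq L^1(Q)$.)

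Finally I would invoke Assumption \ref{noarb} with this $\psi$ and this $n$: the display above is exactly the hypothesis $\mathbb E[\psi(X_{t+n}) \mid X_t = x] > 0$ a.e.-$[Q]$, so no-arbitrage yields $T_n \psi(x) > 0$ a.e.-$[Q]$, i.e. $T^n \psi > 0$ a.e.-$[Q]$. As $\psi \in \Psi$ was arbitrary, $T$ satisfies eventual strong positivity.

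The only real subtlety is the truncation in the second paragraph: applying Assumption \ref{irred} directly to the support $\{\psi > 0\}$ would only give $\Pr(X_{t+n} \in \{\psi > 0\} \mid X_t = x) > 0$, which need not force the conditional expectation of $\psi$ to be strictly positive, since $\psi$ could be arbitrarily small there; passing to $S_k$, on which $\psi \geq 1/k$, is what delivers the strict positivity, and the choice of $k$ must be (and is) made uniformly in $x$ since it depends only on $\psi$. Everything else is routine.
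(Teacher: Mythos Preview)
Your proof is correct and follows the same route as the paper: use irreducibility to get positive conditional probability of landing in the support of $\psi$, deduce that the conditional expectation of $\psi$ is positive, and then invoke no-arbitrage. The truncation to $S_k=\{\psi>1/k\}$ is, however, unnecessary, and your stated concern about the direct approach is mistaken: for a nonnegative random variable $Z$ under any (regular conditional) probability, $\Pr(Z>0)>0$ already forces $\mathbb E[Z]>0$, since $\mathbb E[Z]=0$ with $Z\geq 0$ implies $Z=0$ almost surely. The paper simply applies Assumption~\ref{irred} to $N_\psi=\{\psi>0\}$ and concludes $\mathbb E[\psi(X_{t+n})\mid X_t=x]>0$ a.e.\ directly; no lower bound on $\psi$ is needed for strict positivity of the expectation.
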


HS assume that their (continuous-time) SDF process is positive almost surely. The following remark shows that a discrete-time version of their condition implies Assumption \ref{noarb}.

\begin{remark}\label{esp}
Let $\overline Q_n$ denote the joint distribution of $(X_t,X_{t+1},Y_{t+1},\ldots,X_{t+n},Y_{t+n})$. Suppose that $\prod_{i=0}^{n-1} m(X_{t+i},X_{t+1+i},Y_{t+1+i}) > 0$ a.e.-$[\overline Q_n]$ for each $n \geq 1$. Then  Assumption \ref{noarb} is satisfied.
\end{remark}

An identification result is presented first, for which some extra notation is required. Let $q = \infty$ if $p = 1$, otherwise let $q$ be such that $q^{-1} + p^{-1} = 1$, and let $L^q$ denote the space $L^q(\mathcal X, \mathscr X, Q)$. Let $T^* : L^q \to L^q$ denote the adjoint of $T$ defined in (\ref{adjoint}), and note that $T_n^* = (T^*)^n$ as a consequence of the factorization $T_n = T^n$.

\begin{theorem}\label{id na disc}
Let Assumptions \ref{lrr cpt}, \ref{irred} and \ref{noarb} hold and let $\phi \in L^p$ be a principal eigenfunction of $\{T_n : n \geq 1\}$ with eigenvalue $\rho$. Then:\\[-20pt]
\begin{enumerate}
\item $\phi$ is the unique eigenfunction of $\{T_n : n \geq 1\}$ in $L^p$ that is non-negative a.e.-$[Q]$
\item $\rho$ is positive, simple, isolated, and $\rho = r(T)$
\item there exists $\phi^* \in L^q$ with $\phi^* > 0$ a.e.-$[Q]$ such that $T_n^* \phi^* = \rho^n \phi^*$ for each $n \geq 1$
\item $\phi^*$ is the unique eigenfunction of $T^*$ in $L^q$ that is non-negative a.e.-$[Q]$.
\end{enumerate}
\end{theorem}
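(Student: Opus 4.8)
The plan is to deduce Theorem \ref{id na disc} directly from Corollary \ref{cL2} (equivalently Theorem \ref{tL2}) applied to the single operator $T$ on $L^p = L^p(\mathcal X,\mathscr X,Q)$, by checking Assumptions \ref{aL2}(i)(ii)(iii) and then translating the conclusions back to the collection $\{T_n : n \geq 1\}$ via the factorization $T_n = T^n$ and Remark \ref{equiv remark}.

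First I would verify the three structural hypotheses. Assumption \ref{aL2}(i) (positivity): as already observed, $T$ is the composition of the multiplication operator by $m \geq 0$ a.e.-$[\overline Q]$ with the conditional expectation operator, each of which is positive, so $T$ is positive; boundedness is Assumption \ref{lrr cpt}(i). Assumption \ref{aL2}(ii) (eventual strong positivity): this is exactly the content of Lemma \ref{suff conds} under Assumptions \ref{irred} and \ref{noarb}. Assumption \ref{aL2}(iii) (power compactness): Assumption \ref{lrr cpt}(ii) gives that $T_n$ is compact for some $n \geq 1$, and the factorization $T_n = T^n$ then says $T^n$ is compact. Note that Assumption \ref{aL2}(iv) is not assumed here; it will be supplied by the hypothesized existence of a principal eigenfunction.

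Next I would invoke Corollary \ref{cL2}. By Remark \ref{equiv remark}, the principal eigenfunction $\phi$ of $\{T_n : n \geq 1\}$ is a positive eigenfunction of $T$, with $T\phi = \rho\phi$. Hence Corollary \ref{cL2} applies with $\bar f = \phi$: Assumption \ref{aL2}(iv) holds (in particular $r(T) > 0$) and the conclusions of Theorem \ref{tL2} are in force. From Theorem \ref{tL2}(i)(ii), the non-negative eigenfunctions of $T$ in $L^p$ are precisely the positive scalar multiples of $\bar f$, which satisfies $T\bar f = r(T)\bar f$; since $\phi$ is non-negative with $T\phi = \rho\phi$, it must be such a multiple of $\bar f$, so $\rho = r(T)$. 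As any eigenfunction of the whole collection $\{T_n\}$ is in particular an eigenfunction of $T_1 = T$, part (i) follows. Part (ii) follows from $\rho = r(T) > 0$ together with the simplicity and isolatedness of $r(T)$ from Theorem \ref{tL2}(iii). For part (iii), Theorem \ref{tL2}(i) supplies $\phi^* \in L^q$ with $\phi^* > 0$ a.e.-$[Q]$ and $T^*\phi^* = r(T)\phi^* = \rho\phi^*$; applying $(T^*)^n = T_n^*$ gives $T_n^*\phi^* = \rho^n\phi^*$ for every $n \geq 1$. Part (iv) is Theorem \ref{tL2}(ii) applied to the adjoint.

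The argument is short because all the substance is carried by Theorem \ref{tL2}, whose proof is deferred to the appendix; at this level the only points needing care are the bookkeeping identifications — that ``eigenfunction of $\{T_n : n \geq 1\}$'' reduces to ``eigenfunction of $T$'' through $T_1 = T$ and $T_n = T^n$, and that the hypothesized existence of $\phi$ is precisely what upgrades Assumptions \ref{aL2}(i)(ii)(iii) to the full Assumption \ref{aL2} via Corollary \ref{cL2}, pinning down $\rho = r(T)$ and ruling out $\rho = 0$ (which eventual strong positivity would contradict, since $T^n\phi = \rho^n\phi$ must be strictly positive for some $n$).
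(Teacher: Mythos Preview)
Your proposal is correct and follows essentially the same approach as the paper: verify Assumptions \ref{aL2}(i)(ii)(iii) for $T$ using positivity of the SDF and conditional expectation, Lemma \ref{suff conds}, and the factorization $T_n = T^n$, then use the hypothesized principal eigenfunction $\phi$ to obtain non-degeneracy and apply Theorem \ref{tL2}. The only cosmetic difference is that you invoke Corollary \ref{cL2} to supply Assumption \ref{aL2}(iv), whereas the paper verifies it directly (via $r(T) \geq \rho$ and $\rho > 0$ from $T^n\phi = \rho^n\phi > 0$ a.e.), which is precisely the argument packaged inside Corollary \ref{cL2}; your additional bookkeeping on $\rho = r(T)$ and the translation back to $\{T_n\}$ is more explicit than the paper's terse proof but not a different route.
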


Theorem \ref{id na disc} shows that if one can be guaranteed of the existence of a principal eigenfunction of the collection of operators $\{T_n : n \geq 1\}$ then this principal eigenfunction must be unique. Moreover, it shows that the eigenvalue $\rho$ is the largest eigenvalue of $T$. 

As in Section \ref{sL2}, an extra non-degeneracy condition is ensures existence and uniqueness of the principal eigenfunction. Here the non-degeneracy condition may be formulated in terms of the yield on zero-coupon bonds. Let $1(\cdot) : \mathcal X \to \mathbb R$ denote the constant function, i.e. $1(x) = 1$ for all $x \in \mathcal X$. The price of a zero-coupon bond maturing in $n$ periods is $T_n 1(x)$, and its yield-to-maturity $y_n(x)$ is defined by the relation
\begin{equation}
 T_n 1(x) = (1+y_n(x))^{-n}\,.
\end{equation}
or, equivalently, $y_n(x) = [T_n 1(x)]^{-1/n}-1$.

\begin{assumption}\label{nondeg}
There exists $n \geq 1$ and $C < \infty$ such that $y_n(x) \leq C$ a.e.-$[Q]$.
\end{assumption}

\begin{lemma}\label{nondeg proof}
$r(T) > 0$ under Assumption \ref{nondeg}.
\end{lemma}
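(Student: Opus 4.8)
The plan is to use the constant function $1(\cdot)$ as the test function that witnesses non-degeneracy. Since $Q$ is a probability measure, $\int 1\,\mathrm dQ = 1 < \infty$, so $1 \in L^p$, and clearly $1 \geq 0$ a.e.-$[Q]$ with $1 \neq 0$. The strategy is to convert the yield bound of Assumption \ref{nondeg} into a pointwise lower bound of the form $T^n 1 \geq \delta\cdot 1$ a.e.-$[Q]$ for some $\delta > 0$, and then to read off $r(T) > 0$ either from the remark following Assumption \ref{aL2} (i.e.\ Proposition 3 of \cite{Schaefer1960}) or by a short direct Gelfand-formula argument.

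First I would translate Assumption \ref{nondeg} into an operator inequality. Without loss of generality take $C \geq 0$, since the bound $y_n(x) \leq C$ persists if $C$ is enlarged; in particular $1 + C > 0$. Because $y_n$ is defined by $T_n 1(x) = (1 + y_n(x))^{-n}$ with $1 + y_n(x) > 0$ a.e.-$[Q]$, the inequality $y_n(x) \leq C$ gives $0 < 1 + y_n(x) \leq 1 + C$ a.e.-$[Q]$; raising both sides to the power $-n < 0$ reverses the inequality, so
\begin{equation*}
 T_n 1(x) = (1 + y_n(x))^{-n} \geq (1+C)^{-n} =: \delta > 0 \qquad \text{a.e.-}[Q].
\end{equation*}
Using the factorization $T_n = T^n$, this says precisely $T^n 1 \geq \delta \cdot 1$ a.e.-$[Q]$.

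From here I would conclude in one of two equivalent ways. The quickest is to invoke the remark after Assumption \ref{aL2}: the existence of $f = 1 \in L^p$ with $f \geq 0$, $f \neq 0$, and $T^n f \geq \delta f$ a.e.-$[Q]$ for some $n \geq 1$ and $\delta > 0$ immediately yields $r(T) > 0$ by \citet[Proposition 3]{Schaefer1960}. For a self-contained argument I would instead iterate: positivity of $T$ (hence of $T^n$) together with $T^n 1 \geq \delta\cdot 1$ gives, by induction, $T^{nk} 1 \geq \delta^k \cdot 1$ a.e.-$[Q]$ for every $k \geq 1$; since the operator norm satisfies $\|T^{nk}\| \geq \|T^{nk} 1\|_{L^p}/\|1\|_{L^p} \geq \delta^k$ (using $\|1\|_{L^p} = 1$), Gelfand's formula gives $r(T) = \lim_{k\to\infty}\|T^{nk}\|^{1/(nk)} \geq \delta^{1/n} > 0$.

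The argument is short, and the only points needing care are minor: that $1 \in L^p$ (true because $Q$ is a probability measure), that $1 + C > 0$ so the power $-n$ may be applied to $1 + y_n(x)$, and getting the direction of the inequality right when raising to a negative power. I do not expect any substantive obstacle.
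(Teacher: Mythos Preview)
Your proposal is correct and matches the paper's own proof essentially line for line: both use the constant function $1$, convert the yield bound into $T^n 1 \geq (1+C)^{-n}\cdot 1$ a.e.-$[Q]$ via $T_n = T^n$, and then invoke Proposition~3 of \cite{Schaefer1960}. Your additional self-contained Gelfand-formula alternative is a nice touch but not needed for the paper's argument.
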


Assumption \ref{nondeg} is satisfied with $n = 1$ for the C-CAPM in (\ref{C-CAPM}) if expected consumption growth $\mathbb E[g(X_t,X_{t+1},Y_{t+1})|X_t=x]$ is uniformly bounded away from infinity a.e.-$[Q]$. Assumption \ref{nondeg} is also trivially satisfied in models with a constant risk free rate. Assumption \ref{nondeg} can be weakened to require only that there exists $\psi \in \Psi$ and $n \in \mathbb N$ such that $T_n \psi(x) \geq \delta \psi(x)$ a.e.-$[Q]$ for some $\delta > 0$.

\begin{theorem}\label{ex id na disc}
Let Assumptions \ref{lrr cpt}, \ref{irred}, \ref{noarb}, and \ref{nondeg} hold. Then:\\[-20pt]
\begin{enumerate}
\item there exists a principal eigenfunction $\phi \in L^p$ of $\{T_n : n \geq 1\}$
\item the conclusions of Theorem \ref{id na disc} hold.
\end{enumerate}
\end{theorem}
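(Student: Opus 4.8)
The plan is to verify that $T : L^p \to L^p$ satisfies all four parts of Assumption \ref{aL2} and then read off both conclusions from Theorem \ref{tL2} (equivalently, from Theorem \ref{id na disc}). First I would assemble the positivity and compactness facts. As observed in the text preceding Assumption \ref{irred}, $T$ is the composition of the multiplication operator $M_m$ (with $m \geq 0$ a.e.-$[\overline Q]$) and the conditional expectation operator, both positive, so $T$ is positive; this is Assumption \ref{aL2}(i). Boundedness of $T$ on $L^p$ is Assumption \ref{lrr cpt}(i), and compactness of $T^n = T_n$ for some $n$ is Assumption \ref{lrr cpt}(ii), which is Assumption \ref{aL2}(iii). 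Lemma \ref{suff conds} says Assumptions \ref{irred} and \ref{noarb} deliver eventual strong positivity, i.e. Assumption \ref{aL2}(ii); and Lemma \ref{nondeg proof} says Assumption \ref{nondeg} gives $r(T) > 0$, i.e. Assumption \ref{aL2}(iv).

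With Assumption \ref{aL2} in force, Theorem \ref{tL2}(i) yields $\bar f \in L^p$ with $\bar f > 0$ a.e.-$[Q]$ and $T\bar f = r(T)\bar f$. Setting $\phi = \bar f$ and $\rho = r(T) > 0$, Remark \ref{equiv remark} together with the factorization $T_n = T^n$ shows that $\phi$ is a principal eigenfunction of $\{T_n : n \geq 1\}$ with eigenvalue $\rho$, establishing part (i).

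For part (ii), note that the existence of a principal eigenfunction has now been secured, so the full set of hypotheses of Theorem \ref{id na disc} — Assumptions \ref{lrr cpt}, \ref{irred}, \ref{noarb}, plus the existence of a principal eigenfunction — holds, and its conclusions follow at once. Equivalently, one can read the same conclusions directly from Theorem \ref{tL2}(ii)(iii) applied to $T$ and, via $T_n^* = (T^*)^n$, to $T^*$: uniqueness of the non-negative eigenfunctions of $T$ and $T^*$, and simplicity and isolatedness of $\rho = r(T)$.

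There is no substantive obstacle here: the statement is essentially a repackaging of Theorem \ref{tL2} for the pricing-operator setting. The only points requiring care are (a) matching each clause of Assumption \ref{aL2} to one of the four named assumptions or to Lemma \ref{suff conds}/Lemma \ref{nondeg proof}, and (b) using Remark \ref{equiv remark} and the factorization $T_n = T^n$ to translate between ``positive eigenfunction of $T$'' and ``principal eigenfunction of $\{T_n : n \geq 1\}$.'' All of the genuinely nontrivial content resides in Theorem \ref{tL2} and in Lemmas \ref{suff conds} and \ref{nondeg proof}, which may be assumed.
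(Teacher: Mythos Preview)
Your proposal is correct and is exactly the approach the paper takes: the paper's own proof is the single line ``Immediate by application of Theorem \ref{tL2},'' and you have simply spelled out the verification of each clause of Assumption \ref{aL2} via Lemma \ref{suff conds}, Lemma \ref{nondeg proof}, and Assumption \ref{lrr cpt}, together with Remark \ref{equiv remark} for the translation to principal eigenfunctions.
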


Similar results to Theorems \ref{id na disc} and \ref{ex id na disc} are presented in \cite{Christensen2012}. However, there the positivity, eventual strong positivity, and non-degeneracy conditions are satisfied by imposing primitive restrictions on the kernel representing the operator $T$.

\subsection{Identification with higher-order Markov processes}\label{AR(p)}

Higher-order Markov processes are popular modeling devices as they maintain the tractability of Markov processes while allowing for richer dynamics than first-order processes. Examples include higher-order vector autoregressive processes (VARs), higher-order nonlinear VARs (\cite{HTY1998} and references therein), autoregressive Gamma processes \citep{GourierouxJasiak2006}, and compound autoregressive processes \citep{DarollesGourierouxJasiak2006}. See, e.g., \cite{MonfortPegoraro2007}, \cite{BertholonMonfortPegoraro2008} and \cite{Eraker2008} for applications of these processes to term structure models.

To fix ideas, in what follows let $\{W_t\}_{t=-\infty}^\infty$ be a strictly stationary Markov process of finite order $\ell>1$ where each $W_t$ has full support $\mathbb R$ (the following analysis extends directly to the multivariate case, except the notation is more complicated). Setting $X_t = (W_t,W_{t-1},\ldots,W_{t-\ell+1})'$ makes $\{X_t\}_{t=-\infty}^\infty$ a strictly stationary  first-order Markov process, where each $X_t$ has support $\mathbb R^\ell$. Let $Q$ denote the unconditional distribution of $X_t$ on $\mathbb R^\ell$. 

The identification conditions presented in Section \ref{comparison} run into difficulty in such environments for two reasons. First, the conditional density of $X_{t+1}$ given $X_t$ does not exist: the elements $(W_t,\ldots,W_{t-\ell+2})$ of $X_{t+1}$ are known when $X_{t}$ is known. Second, and more critically, the operator $T$ may fail to be compact, thereby violating the integrability condition (b) in Section \ref{comparison}. The following Lemma describes one situation in which compactness of $T$ may fail.

\begin{lemma}\label{noncompact}
Let $X_t = (W_t,W_{t-1},\ldots,W_{t-\ell+1})'$ and let $\{W_t\}_{t=-\infty}^\infty$ be as described above. If there exists $c > 0$ such that $\mathbb E[m(X_t,X_{t+1},Y_{t+1})|X_t = x] \geq c$ a.e.-$[Q]$  then $T : L^2 \to L^2$ is not compact.
\end{lemma}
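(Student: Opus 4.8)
The strategy is to show that $T$ cannot be compact by producing an orthonormal sequence in $L^2 := L^2(\mathcal X,\mathscr X,Q)$ whose image under $T$ stays bounded away from zero in norm, which is impossible for a compact operator since compact operators send weakly null sequences to norm-null sequences. The structural fact driving everything is the overlap between $X_t$ and $X_{t+1}$: since $X_t = (W_t,W_{t-1},\ldots,W_{t-\ell+1})'$ and $X_{t+1} = (W_{t+1},W_t,\ldots,W_{t-\ell+2})'$, coordinates $2,\ldots,\ell$ of $X_{t+1}$ are exactly coordinates $1,\ldots,\ell-1$ of $X_t$. Accordingly, for $g:\mathbb R^{\ell-1}\to\mathbb R$ set $(\iota g)(w_1,\ldots,w_\ell) := g(w_2,\ldots,w_\ell)$ and $(\tilde\iota g)(w_1,\ldots,w_\ell) := g(w_1,\ldots,w_{\ell-1})$, so that $(\iota g)(X_{t+1}) = (\tilde\iota g)(X_t)$ identically. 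Because $(\tilde\iota g)(X_t)$ is $\sigma(X_t)$-measurable it may be taken outside the conditional expectation in (\ref{euler}), yielding, for bounded $g$,
\begin{equation*}
 T(\iota g)(x) = \mathbb E\bigl[ m(X_t,X_{t+1},Y_{t+1})\,(\tilde\iota g)(X_t) \bigm| X_t = x \bigr] = a(x)\,(\tilde\iota g)(x),
\end{equation*}
where $a(x) := \mathbb E[m(X_t,X_{t+1},Y_{t+1})|X_t = x] = T\mathbf 1(x)$; the interchange is valid because $a = T\mathbf 1 \in L^2 \subseteq L^1(Q)$ forces $m$ to be integrable, so $m\,(\tilde\iota g)(X_t)$ is integrable when $g$ is bounded.

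Next I would use strict stationarity to control norms. Let $Q_-$ be the distribution of $(W_t,\ldots,W_{t-\ell+2})$, which by stationarity also equals the distribution of $(W_{t-1},\ldots,W_{t-\ell+1})$; then
\begin{equation*}
 \|\iota g\|_{L^2} = \|\tilde\iota g\|_{L^2} = \|g\|_{L^2(Q_-)}
\end{equation*}
for every $g\in L^2(Q_-)$, so $\iota$ and $\tilde\iota$ are linear isometries into $L^2$. Since $\ell > 1$ and each $W_t$ has full support $\mathbb R$, the space $L^2(Q_-)$ is infinite-dimensional, and a bounded orthonormal sequence $(g_k)_{k\ge1}$ can be written down explicitly: take pairwise disjoint intervals $I_k = (k,k+1)\subseteq\mathbb R$, note $\Pr(W_t\in I_k) > 0$ by full support, and set $g_k(v_1,\ldots,v_{\ell-1}) := \mathbf 1\{v_1\in I_k\}/\sqrt{\Pr(W_t\in I_k)}$.

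Putting the pieces together, $(\iota g_k)_{k\ge1}$ is an orthonormal sequence in $L^2$, hence weakly null. If $T$ were compact we would have $\|T(\iota g_k)\|_{L^2}\to 0$; but $T(\iota g_k) = a\cdot\tilde\iota g_k$ with $a\ge c$ a.e.-$[Q]$, so
\begin{equation*}
 \|T(\iota g_k)\|_{L^2}^2 = \int a^2\,(\tilde\iota g_k)^2\,\mathrm dQ \ \ge\ c^2\,\|\tilde\iota g_k\|_{L^2}^2 = c^2 > 0
\end{equation*}
for every $k$, a contradiction. Hence $T:L^2\to L^2$ is not compact.

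The only genuinely delicate points are bookkeeping ones: confirming $\iota g,\tilde\iota g\in L^2$ with the stated norm identities (this is exactly where strict stationarity enters) and justifying the ``take out what is known'' step for the conditional expectation (which reduces to integrability of $m$, inherited from $a = T\mathbf 1\in L^1(Q)$, once we restrict to bounded $g$). The hypothesis $\ell > 1$ is needed so that $X_{t+1}$ and $X_t$ share at least one coordinate and $L^2(Q_-)$ is a genuine, infinite-dimensional space, and the lower bound $a\ge c$ is precisely what prevents the isometry $\tilde\iota$ from being annihilated by $T$; everything else follows from the shift structure and the standard fact that compact operators are completely continuous.
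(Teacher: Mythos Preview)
Your proof is correct and follows essentially the same approach as the paper: both exploit the overlap between $X_t$ and $X_{t+1}$ to see that functions depending only on $\ell-1$ coordinates satisfy $T\psi = a\cdot(\text{shift of }\psi)$ with $a\ge c$, and then use stationarity to conclude that $T$ restricted to this infinite-dimensional subspace is bounded below, contradicting compactness. The only cosmetic difference is that you invoke complete continuity (orthonormal $\Rightarrow$ weakly null $\Rightarrow$ $\|T\psi_k\|\to 0$), whereas the paper runs the equivalent Cauchy-subsequence argument directly.
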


Theorem \ref{id na disc} is now applied to derive identification conditions in higher-order environments. Of course, Assumptions \ref{lrr cpt}, \ref{irred}, and \ref{noarb} may be assumed directly in which case identification is immediate by Theorem \ref{id na disc}. To illustrate the power of the new identification results, further primitive conditions on the process $\{W_t\}_{t=-\infty}^\infty$ and the SDF $m$ are now derived. These are but one example of sufficient conditions for Assumptions \ref{lrr cpt} and \ref{irred} in higher-order Markov environments and are in no way exhaustive.

\begin{assumption}\label{aWlin} $\{W_t\}_{t=-\infty}^\infty$ is a strictly stationary nonlinear AR$(\ell)$ process of the form
\begin{equation}
 W_{t+1} = h( W_t , \ldots , W_{t-\ell+1} ) + u_{t+1}
\end{equation}
where $h : \mathbb R^{\ell} \to \mathbb R$ is a measurable function and the innovations $u_{t}$ are i.i.d. with $\mathbb E[u_t] = 0$ and density $f_U$ such that $f_U(u) > 0$ for all $u \in \mathbb R$. 
\end{assumption}

\begin{lemma} \label{lem-irred}
Let Assumption \ref{aWlin} hold. Then Assumption \ref{irred} is satisfied.
\end{lemma}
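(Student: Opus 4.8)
The plan is to take $n(S) = \ell$ for \emph{every} $S$ and to show that, conditional on $X_t = x$, the random vector $X_{t+\ell}$ has a density with respect to Lebesgue measure on $\mathbb R^\ell$ that is strictly positive at every point. Granting this, two consequences follow. First, since $Q$ is stationary, $Q(\cdot) = \int \Pr(X_{t+\ell} \in \cdot \mid X_t = x)\,\mathrm dQ(x)$, so $Q$ is absolutely continuous with respect to Lebesgue measure; hence $Q(S) > 0$ forces $S$ to have strictly positive Lebesgue measure. Second, integrating the strictly positive conditional density over such an $S$ yields $\Pr(X_{t+\ell} \in S \mid X_t = x) > 0$ for \emph{every} $x \in \mathbb R^\ell$, which is even stronger than the $Q$-a.e.\ statement required by Assumption \ref{irred}.

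To obtain the conditional density I would iterate the autoregression. Write $X_t = x = (x_1,\dots,x_\ell)'$. Conditional on $X_t = x$, Assumption \ref{aWlin} gives $W_{t+1} = h(x_1,\dots,x_\ell) + u_{t+1}$, then $W_{t+2} = h(W_{t+1},x_1,\dots,x_{\ell-1}) + u_{t+2}$, and in general $W_{t+j} = h(W_{t+j-1},\dots,W_{t+1},x_1,\dots,x_{\ell-j+1}) + u_{t+j}$ for $1 \le j \le \ell$. Thus the map $(u_{t+1},\dots,u_{t+\ell}) \mapsto (W_{t+1},\dots,W_{t+\ell})$ is triangular with unit Jacobian determinant, so by the change-of-variables formula and independence of the innovations the conditional density of $(W_{t+1},\dots,W_{t+\ell})$ given $X_t = x$ at a point $(w_1,\dots,w_\ell)$ is
\[
 \prod_{j=1}^{\ell} f_U\!\left(w_j - h(w_{j-1},\dots,w_1,x_1,\dots,x_{\ell-j+1})\right),
\]
which is strictly positive because $f_U > 0$ everywhere on $\mathbb R$. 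Since $X_{t+\ell}$ is merely a reordering of $(W_{t+1},\dots,W_{t+\ell})'$, it likewise has a strictly positive Lebesgue density conditional on $X_t = x$. Combining this with the two observations in the previous paragraph closes the argument.

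I do not expect a substantive obstacle. The only points needing care are (i) that Assumption \ref{irred} is stated in terms of $Q(S) > 0$ rather than positive Lebesgue measure, which is precisely why the absolute-continuity step via stationarity is invoked, and (ii) the bookkeeping in the triangular change of variables for general $\ell$, which is routine but must track the indices of $h$ correctly.
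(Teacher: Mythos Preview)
Your argument is correct and follows the same route as the paper: both take $n=\ell$ and use that the $\ell$-step transition kernel of $X_t$ admits a strictly positive Lebesgue density. The paper simply cites Chan and Tong for this fact, whereas you derive it directly via the triangular structure of the autoregression; you are also more careful than the paper in explicitly deducing that $Q$ is absolutely continuous with respect to Lebesgue measure, so that $Q(S)>0$ forces positive Lebesgue measure of $S$ (the paper's proof starts from positive Lebesgue measure without justifying this reduction). One small caveat: since Assumption~\ref{aWlin} only requires $h$ to be measurable, invoking the ``change-of-variables formula'' and a Jacobian is slightly informal; the same product density follows cleanly from successive conditioning (each $W_{t+j}$ given the past is a fixed shift of $u_{t+j}$), which needs no smoothness of $h$.
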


The state process in Assumption \ref{aWlin} is similar to that recently studied in \cite{BackusChernovZin}, except here $\{W_t\}_{t=-\infty}^\infty$ evolves as a nonlinear AR$(\ell)$ process rather than a linear MA$(\infty)$ process. See \cite{FanYao} and references therein for further regularity conditions on $h$ which ensure stationarity and ergodicity of $\{W_t\}_{t=-\infty}^\infty$.

Although the conditional density of $X_{t+l}$ given $X_t$ does not exist for $l = 1,\ldots,\ell-1$, the conditional density of $X_{t+\ell}$ given $X_t$ may exist under Assumption \ref{aWlin}. For example, when $\ell = 2$
\begin{equation}
 \left( \begin{array}{c}
 W_{t+2} \\
 W_{t+1}
 \end{array} \right) =
 \left( \begin{array}{c}
 h(h(W_t,W_{t-1})+u_{t+1},W_t)+u_{t+2} \\
 h(W_t,W_{t-1})+u_{t+1}
 \end{array} \right)
\end{equation}
thus $X_{t+2} = (W_{t+2},W_{t+1})'$ is a measurable function of $X_t=(W_t,W_{t-1})'$ and the innovations $u_{t+1}$ and $u_{t+2}$. Analogous results clearly hold for $X_{t+\ell}$ when $\ell > 2$. 

The following conditions on the SDF and $\{X_t\}_{t=-\infty}^\infty$ are sufficient for $T$ to be bounded and power compact. To simplify notation, in what follows it is assumed that $m(X_t,X_{t+1},Y_{t+1}) = m(X_t,W_{t+1})$.

\begin{assumption}\label{aWsdf}
There exists a finite constant $C$ such that $\mathbb E[m(X_t,W_{t+1})^2|X_t=x] \leq C$ a.e.-$[Q]$.
\end{assumption}

\begin{assumption}\label{aWcpt}
$Q$ has density $q$, the conditional density $f(X_{t+\ell}|X_t)$  of $X_{t+\ell}$ given $X_t$ exists, and
\begin{eqnarray*}
 \int \int \left( \left(m(X_t,W_{t+1}) \cdots m(X_{t+\ell-1},W_{t+\ell}) \right) \frac{f(X_{t+\ell}|X_t)}{q(X_{t+\ell})} \right)^2 \,\mathrm dQ(X_t)\,\mathrm dQ(X_{t+\ell}) < \infty \,.
\end{eqnarray*}
\end{assumption}

Assumption \ref{aWsdf} implies that $T : L^2 \to L^2$ is bounded. Assumption \ref{aWcpt} implies that $T_\ell : L^2 \to L^2$ is compact, and is analogous to the integrability condition in \cite{Chenetal2012} and \cite{Christensen2012} but applied to the operator $T_\ell$ rather than $T$. The following identification result is a direct consequence of Theorem \ref{id na disc}.

\begin{theorem}\label{tW}
Let Assumptions \ref{noarb}, \ref{aWlin}, \ref{aWsdf} and \ref{aWcpt} hold and let $\phi \in L^2$ be a positive eigenfunction of $T$ with eigenvalue $\rho$. Then: \\[-20pt]
\begin{enumerate}
\item $\phi$ is the unique eigenfunction of $\{T_n : n \geq 1\}$ in $L^2$ that is non-negative a.e.-$[Q]$
\item $\rho$ is positive, simple, isolated, and $\rho = r(T)$.
\end{enumerate}
\end{theorem}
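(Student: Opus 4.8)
The plan is to reduce the claim to Theorem \ref{id na disc}, whose hypotheses are Assumptions \ref{lrr cpt}, \ref{irred}, and \ref{noarb} together with the existence of a principal eigenfunction of $\{T_n : n \geq 1\}$. Assumption \ref{noarb} is imposed directly; Assumption \ref{irred} holds by Lemma \ref{lem-irred} under Assumption \ref{aWlin}; and since $T\phi = \rho\phi$ gives $T_n\phi = T^n\phi = \rho^n\phi$ for every $n \geq 1$, Remark \ref{equiv remark} shows the hypothesized positive eigenfunction $\phi$ of $T$ is a principal eigenfunction of the collection with eigenvalue $\rho$. (Positivity of $T$, used inside Theorem \ref{id na disc}, holds because $T$ is the composition of the positive multiplication operator $M_m$ and the positive conditional-expectation operator.) It therefore remains only to verify Assumption \ref{lrr cpt}. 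Boundedness of $T : L^2 \to L^2$ is a routine estimate: for $\psi \in L^2$, conditional Cauchy--Schwarz and Assumption \ref{aWsdf} give $|T\psi(x)|^2 \leq \mathbb E[m(X_t,W_{t+1})^2 \mid X_t = x]\,\mathbb E[\psi(X_{t+1})^2 \mid X_t = x] \leq C\,\mathbb E[\psi(X_{t+1})^2 \mid X_t = x]$, and integrating against $Q$ and using stationarity and the law of iterated expectations yields $\|T\psi\|_{L^2}^2 \leq C\|\psi\|_{L^2}^2$.

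The substantive step is compactness of $T_\ell$. The key observation, and the reason the higher-order structure is handled at horizon $\ell$ rather than at horizon $1$, is that $X_t = (W_t,\ldots,W_{t-\ell+1})'$ and $X_{t+\ell} = (W_{t+\ell},\ldots,W_{t+1})'$ together determine $W_{t-\ell+1},\ldots,W_{t+\ell}$ and hence every intermediate state $X_{t+s}$ for $0 \leq s \leq \ell$. Consequently the product $\prod_{s=0}^{\ell-1} m(X_{t+s},W_{t+s+1})$ appearing in (\ref{euler2}) is a measurable function $M_\ell(X_t,X_{t+\ell})$ of the pair $(X_t,X_{t+\ell})$ alone, so, rewriting the conditional expectation in (\ref{euler2}) as an integral against the conditional density $f(X_{t+\ell} \mid X_t)$ and reweighting by $1/q$, one obtains $T_\ell\psi(x) = \int \tilde k_\ell(x,y)\psi(y)\,\mathrm dQ(y)$ with $\tilde k_\ell(x,y) = M_\ell(x,y)\,f(y \mid x)/q(y)$. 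Assumption \ref{aWcpt} is exactly the statement that $\int\int \tilde k_\ell(x,y)^2\,\mathrm dQ(x)\,\mathrm dQ(y) < \infty$, so $T_\ell$ is Hilbert--Schmidt on $L^2(Q)$, hence compact. With Assumption \ref{lrr cpt} verified, conclusions (i) and (ii) follow from parts (i) and (ii) of Theorem \ref{id na disc}.

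I expect the only genuine obstacle to be making this last reduction airtight: carefully justifying that the product of one-period SDFs collapses to a function of $(X_t,X_{t+\ell})$ under the stacked-Markov structure, and that (\ref{euler2}) with $n = \ell$ may legitimately be rewritten as integration against $f(X_{t+\ell} \mid X_t)$ reweighted by $1/q$, so that Assumption \ref{aWcpt} delivers a square-integrable kernel and Hilbert--Schmidt compactness applies. Everything else is bookkeeping layered on Theorem \ref{id na disc}.
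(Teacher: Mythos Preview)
Your proposal is correct and follows exactly the route the paper intends: the paper states that Theorem \ref{tW} is ``a direct consequence of Theorem \ref{id na disc}'' and remarks in the surrounding text that Assumption \ref{aWsdf} gives boundedness of $T$ and Assumption \ref{aWcpt} gives compactness of $T_\ell$, with Lemma \ref{lem-irred} supplying Assumption \ref{irred}. Your write-up simply fills in the details the paper leaves implicit---the conditional Cauchy--Schwarz bound for boundedness and the observation that $(X_t,X_{t+\ell})$ determines all intermediate states so that $T_\ell$ has a Hilbert--Schmidt kernel under Assumption \ref{aWcpt}---and these details are carried out correctly.
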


\subsection{Identification in continuous-time environments}

Identification conditions for continuous-time models to are now derived using Theorem \ref{tL2}. To describe the environment, let $\{Z_t\}_{t\in \mathbb R}$ be a continuous-time Markov process with support $\mathcal Z \subset \mathbb R^{d_z}$ and whose sample paths are right continuous with left limits. Following HS, consider a class of model in which the date-$t$ price of a claim to $\psi(Z_{t+\tau})$ payable at $t +\tau$ is given by
\begin{equation} \label{euler cont}
 T_\tau \psi(Z_t) = \mathbb E \left[M_\tau(Z_t) \psi(Z_{t+\tau})|Z_t\right]
\end{equation}
for each $\tau \in [0,\infty)$, where the SDF process $M_\tau(Z_t)$ is a multiplicative functional of the sample path $\{Z_s : t \leq s \leq s+\tau\}$, i.e., $M_0(Z_t) = 1$ and $M_{\tau+\upsilon}(Z_t) = M_\tau(Z_{t+\upsilon})M_\upsilon(Z_t)$ for each $\tau,\upsilon \in [0,\infty)$ (see HS for examples and further details).
 This restriction on the SDF means that the collection of pricing operators $\{T_\tau : \tau \in [0,\infty)\}$ factorizes as $T_{\tau+\upsilon} = T_\tau T_\upsilon$ for $\tau,\upsilon \geq 0$ and that $T_0 = I$, the identity operator. In particular, $T_{n \tau} = (T_{\tau})^n$ holds for any $\tau \in [0,\infty)$ and $n \in \mathbb N$.
 
For identification in possibly nonstationary continuous-time environments, HS assume that the SDF process is strictly positive, and that the state process when discretely-sampled is irreducible, Harris recurrent under a change of measure induced by the permanent component of the SDF, and that there exists a stationary distribution for the conditional expectations induced by the change of measure. 

Now consider the special case in which $\{Z_t\}_{t \in \mathbb R}$ is a stationary process.
Let $Q$ denote the stationary distribution of $Z$ on $\mathcal Z$, let $\mathscr Z$ denote the Borel $\sigma$-algebra on $\mathcal Z$, and let $L^p$ denote the space $L^p(\mathcal Z,\mathscr Z,Q)$.

\begin{definition} \label{principal eigenfunction definition cts}
$\phi \in L^p$ is a \emph{principal eigenfunction} of $\{T_\tau : \tau \in [0,\infty) \}$ with eigenvalue $\rho$ if $\phi > 0$ a.e.-$[Q]$ and $T_\tau \phi = \rho^\tau \phi$ for each $\tau \in [0,\infty)$. 
\end{definition}

\begin{remark} \label{equiv remark cts}
By Definition \ref{principal eigenfunction definition cts}, if $\phi$ is a principal eigenfunction of $\{T_\tau : \tau \in [0,\infty) \}$ then $\phi$ is a positive eigenfunction of $T_{\bar \tau}$ for each $\bar \tau > 0$.
\end{remark}

Alternative identification conditions are now derived by applying the results for discrete-time environments to the ``skeleton'' of operators $\{T_{n \bar \tau} : n \geq 1\}$ for some fixed $\bar \tau > 0$. Analysis of continuous-time Markov processes by means of their discretely-sampled skeleton is common practice. For instance, HS express some of their continuous-time identification conditions in terms of the skeleton of $Z_t$ under a change of measure induced by the permanent component.

\begin{assumption}\label{non-negc}
There exists $\bar \tau > 0$ such that Assumptions \ref{lrr cpt}, \ref{irred} and \ref{noarb} hold with $T_{\bar \tau}$ in place of $T$, $T_{n \bar \tau}$ in place of $T_n$, and $\{Z_{n \bar \tau}\}_{n = -\infty}^\infty$ in place of $\{X_t\}_{t=-\infty}^\infty$ .
\end{assumption}

\begin{theorem}\label{mktthmc}
Let Assumption \ref{non-negc} hold and let $\phi \in L^p$ be a principal eigenfunction of $\{T_\tau : \tau \in [0,\infty) \}$ with eigenvalue $\rho$. Then:\\[-20pt]
\begin{enumerate}
\item $\phi$ is the unique eigenfunction of $\{T_\tau : \tau \in [0,\infty) \}$ in $L^p$ that is non-negative a.e.-$[Q]$
\item $\rho^{\bar \tau}$ is a positive, simple, isolated eigenvalue of $T_{\bar \tau}$ and $\rho^{\bar \tau} = r(T_{\bar \tau})$ 
\item there exists $\phi^* \in L^q$ with $\phi^* > 0$ a.e.-$[Q]$ such that $T_{n\bar \tau}^* \phi^* = \rho^{n\bar \tau} \phi^*$ for each $n \geq 1$
\item $\phi^*$ is the unique eigenfunction of $T_{\bar \tau}$ in $L^q$ that is non-negative a.e.-$[Q]$.
\end{enumerate}
\end{theorem}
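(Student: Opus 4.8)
The plan is to deduce the theorem from the discrete-time result, Theorem \ref{id na disc}, applied to the ``skeleton'' $\{T_{n\bar\tau} : n \geq 1\}$. I would first check that the hypotheses of Theorem \ref{id na disc} are met for this skeleton: Assumption \ref{non-negc} states verbatim that Assumptions \ref{lrr cpt}, \ref{irred}, and \ref{noarb} hold with $T_{\bar\tau}$ playing the role of $T$ and $T_{n\bar\tau}$ the role of $T_n$, and $T_{\bar\tau}$ is positive for the same reason as in the discrete-time case (being the composition of multiplication by the non-negative SDF functional $M_{\bar\tau}$ with a conditional-expectation operator). Next, by Remark \ref{equiv remark cts} the given principal eigenfunction $\phi$ of $\{T_\tau : \tau \in [0,\infty)\}$ is a positive eigenfunction of $T_{\bar\tau}$ with eigenvalue $\rho^{\bar\tau}$; since $T_{n\bar\tau} = (T_{\bar\tau})^n$ this gives $T_{n\bar\tau}\phi = (\rho^{\bar\tau})^n\phi$ for every $n$, i.e.\ $\phi$ is a principal eigenfunction of the skeleton with eigenvalue $\rho^{\bar\tau}$.

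With this in hand, conclusions (ii), (iii), and (iv) are immediate translations of conclusions (ii), (iii), and (iv) of Theorem \ref{id na disc}, using $(\rho^{\bar\tau})^n = \rho^{n\bar\tau}$ and the factorization $T_{n\bar\tau}^* = (T_{\bar\tau}^*)^n$: $\rho^{\bar\tau}$ is a positive, simple, isolated eigenvalue of $T_{\bar\tau}$ equal to $r(T_{\bar\tau})$, and there is a strictly positive $\phi^* \in L^q$ with $T_{n\bar\tau}^*\phi^* = \rho^{n\bar\tau}\phi^*$ for all $n$ that is the unique non-negative eigenfunction of $T_{\bar\tau}^*$ in $L^q$.

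The only part not handed over verbatim is conclusion (i), since Theorem \ref{id na disc} yields uniqueness only among eigenfunctions of the skeleton, whereas here one wants uniqueness among eigenfunctions of the full collection $\{T_\tau : \tau \in [0,\infty)\}$. But any non-negative, nonzero $g \in L^p$ that is an eigenfunction of every $T_\tau$ is in particular a non-negative eigenfunction of $T_{\bar\tau}$, hence --- by Theorem \ref{tL2}(ii) applied to $T_{\bar\tau}$ (whose non-degeneracy follows from the existence of $\phi$ via Corollary \ref{cL2}), equivalently by conclusion (i) of Theorem \ref{id na disc} for the skeleton --- a positive scalar multiple of $\phi$; substituting back into $T_\tau g = \lambda_\tau g$ forces $\lambda_\tau = \rho^\tau$. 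Thus $\phi$ is, up to scale, the unique non-negative eigenfunction of $\{T_\tau : \tau \in [0,\infty)\}$, proving (i). I do not anticipate a genuine obstacle: the argument is essentially a reduction to Theorem \ref{id na disc}, and the only point requiring a moment's care is this passage from the discretely-sampled skeleton back to the continuous collection (and, relatedly, confirming that $T_{\bar\tau}$ inherits positivity so that Theorem \ref{id na disc} genuinely applies).
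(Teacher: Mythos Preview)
Your proposal is correct and follows exactly the paper's approach: the paper's proof is the single sentence ``Apply Theorem \ref{id na disc} with $T_{\bar \tau}$ in place of $T$.'' Your write-up is in fact more careful than the paper's, spelling out why the hypotheses of Theorem \ref{id na disc} transfer to the skeleton and why uniqueness for the skeleton implies uniqueness for the full collection $\{T_\tau : \tau \in [0,\infty)\}$.
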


Assumptions \ref{irred} and \ref{noarb} applied to the skeleton $\{T_{n \bar \tau} : n \geq 1\}$ are analogous to HS's assumptions of irreducibility of the discretely-sampled state process (albeit HS impose irreducibility under a change of measure) and strict positivity of the SDF process. HS do not impose stationarity or any form of compactness. HS show that a unique principal eigenfunction satisfies their stochastic stability conditions, but cannot rule out the existence of multiple principal eigenfunctions. In contrast, Theorem \ref{mktthmc} shows that uniqueness of the principal eigenfunction can be achieved in stationary environments by restricting the parameter space to consist of positive functions in $L^p$. Theorem \ref{mktthmc} also shows how the eigenvalue $\rho$ is related to the spectrum of the pricing operators.

Unlike the discrete-time case, proving existence of a positive eigenfunction of $T_{\bar \tau}$ is not enough to show existence of an eigenfunction of the collection $\{T_\tau : \tau \in [0,\infty)\}$ because this collection is not characterized fully by $T_{\bar \tau}$. See HS for existence conditions for continuous-time environments.

\section{Application: external habit formation} \label{habit}

Euler equations have been used extensively for semi/nonparametric estimation of consumption-based asset pricing models following \cite{GallantTauchen}. The results presented in Section \ref{sL2} are now used to derive new identification conditions for a C-CAPM with external habit formation. 

As in \cite{ChenLudvigson}, assume that a representative agent's marginal utility of consumption at time $t$ is given by
\begin{equation} \label{muspec}
 MU_t = C_t^{-\gamma} (1-H(C_{t-1}/C_t,\ldots,C_{t-\ell}/C_t))^{-\gamma}
\end{equation}
where $C_t$ denotes consumption of the agent at time $t$ and $\ell \geq 1$. The intertemporal marginal rate of substitution at time $t$ is 
\begin{equation}
 \beta\frac{MU_{t+1}}{MU_t} = \beta \exp(-\gamma g_{t+1}) \frac{h(g_{t+1},\ldots,g_{t-\ell+2})}{h(g_t,\ldots,g_{t-\ell+1})}
\end{equation}
where $\beta$ is a time preference parameter, $h(g_t,\ldots,g_{t-L+1}) = (1-H(C_{t-1}/C_t,\ldots,C_{t-L}/C_t))^{-\gamma}$, and $g_t = \log(C_t/C_{t-1})$. When current and lagged consumption belong to the agent's date-$t$ information set $\mathcal F_t$, the Euler equation for the (gross) return $R_{i,t,t+1}$ on asset $i$ from time $t$ to time $t+1$, 
namely
\begin{equation}
 1 = \mathbb E\left[ \left.\beta \exp(-\gamma g_{t+1}) \frac{h(g_{t+1},\ldots,g_{t-\ell+2})}{h(g_t,\ldots,g_{t-\ell+1})} R_{i,t,t+1} \right|\mathcal F_t\right]\,,
\end{equation}
can be re-expressed as
\begin{equation}\label{e-habit}
 \beta^{-1} h(X_t) = T h(X_t)
\end{equation}
where $X_t = (g_t,\ldots,g_{t-\ell+1})'$ and the operator $T$ is given by 
\begin{equation}\label{t-habit}
 T \psi(x) = \mathbb E[\exp(-\gamma g_{t+1}) R_i(X_t,g_{t+1}) \psi(X_{t+1}) | X_t = x]
\end{equation}
and  $R_i(X_t,g_{t+1}) = \mathbb E[R_{i,t,t+1}|X_t,g_{t+1}]$. Note $h$ must be positive so that $MU$ is positive. Therefore, the habit function $h$ is a positive eigenfunction of the operator $T$ and $\beta^{-1}$ is its eigenvalue. 
 
\cite{Chenetal2012} use this positive eigenfunction representation of $h$ to study nonparametric identification of $h$ (given $\gamma$) when $\ell =1$. As discussed in Section \ref{comparison}, the identification conditions in \cite{Chenetal2012} do not necessarily carry over to the case $\ell > 1$.

Theorem \ref{tL2} is now used to derive identification conditions when $\ell > 1$. Of course, Assumption \ref{aL2} could be assumed directly in which case identification would be immediate. However, to illustrate the power of the new identification results some further low-level sufficient conditions are now derived. For illustrative purposes it is again assumed that $\{g_t\}_{t=-\infty}^\infty$ is a nonlinear AR$(\ell)$ process. With this specification of the consumption growth process, the operator $T$ in (\ref{t-habit}) is of the same form as the pricing operator analyzed in Theorem \ref{tW} with the substitutions $W_{t} = g_t$ and $m(X_t,W_{t+1}) = \exp(-\gamma g_{t+1})R_i(X_t,g_{t+1})$. 

To introduce the result, let $\{(g_t,R_{i,t,t+1})\}_{t=-\infty}^{\infty}$ be strictly stationary, let $Q$ denote the stationary distribution of $X_t$ where $X_t = (g_t,\ldots,g_{t-\ell+1})'$, let $ Q_\ell$ denote the joint distribution of $(X_t',X_{t+\ell}')'$, and let $L^2$ denote the space $L^2(\mathcal X,\mathscr X,Q)$ where $\mathcal X = \mathbb R^\ell$ is the support of each $X_t$ and $\mathscr X$ is the Borel $\sigma$-algebra on $\mathcal X$.

\begin{assumption}\label{a-habit}
$\{(g_t,R_{i,t,t+1})\}_{t=-\infty}^{\infty}$ is strictly stationary and satisfies the following:\\[-20pt]
\begin{enumerate}
\item $\{g_t\}_{t=-\infty}^\infty$ is a nonlinear AR$(\ell)$ process satisfying Assumption \ref{aWlin}
\item there exists a finite positive $C$ such that $\mathbb E[\{\exp(-\gamma g_{t+1})R_i(X_t,g_{t+1})\}^2|X_t = x] \leq C$ a.e.-$[Q]$
\item $Q$ has density $q$, the conditional density $f(X_{t+\ell}|X_t)$  of $X_{t+\ell}$ given $X_t$ exists, and
\begin{eqnarray*}
 \int \int \left( \left(e^{-\gamma g_{t+1}}R_i(X_t,g_{t+1}) \cdots e^{-\gamma g_{t+\ell}}R_i(X_{t+\ell-1},g_{t+\ell}) \right) \frac{f(X_{t+\ell}|X_t)}{q(X_{t+\ell})} \right)^2 \,\mathrm dQ(X_t) \,\mathrm d Q(X_{t+\ell}) < \infty 
\end{eqnarray*}
\item $e^{-\gamma g_{t+1}}R_i(X_t,g_{t+1}) \cdots e^{-\gamma g_{t+\ell}}R_i(X_{t+\ell-1},g_{t+\ell}) > 0$ a.e.-$[ Q_\ell]$.
\end{enumerate}
\end{assumption}

\begin{theorem}\label{t-habit}
Let Assumption \ref{a-habit} hold and let $h \in L^2$ be a solution to (\ref{e-habit}) with $h > 0$ a.e.-$[Q]$. Then $h$ is the unique solution in $L^2$  to (\ref{e-habit}) with $h \geq 0$ a.e.-$[Q]$ and $h \neq 0$.
\end{theorem}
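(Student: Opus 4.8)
The plan is to deduce Theorem \ref{t-habit} directly from Theorem \ref{tW}. The key observation is that \eqref{e-habit} states exactly that $h$ is a positive eigenfunction of the operator $T$ in \eqref{t-habit} with eigenvalue $\rho = \beta^{-1}$, and that this $T$ coincides with the pricing operator analyzed in Theorem \ref{tW} under the identifications $W_t = g_t$ and $m(X_t,W_{t+1}) = \exp(-\gamma g_{t+1}) R_i(X_t,g_{t+1})$. So conclusion (i) of Theorem \ref{tW} is precisely the assertion of Theorem \ref{t-habit} (recalling that an eigenfunction is identified only up to scale, so any non-negative non-zero solution to \eqref{e-habit} is a non-negative non-zero eigenfunction of $\{T_n : n\geq 1\}$ and hence must agree with $h$). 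The only work is therefore to verify that Assumption \ref{a-habit} delivers Assumptions \ref{noarb}, \ref{aWlin}, \ref{aWsdf} and \ref{aWcpt} for this $T$.

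Three of the four are immediate translations. Assumption \ref{aWlin} is Assumption \ref{a-habit}(i). Under the identification of $m$, one has $m(X_t,W_{t+1})^2 = \{\exp(-\gamma g_{t+1}) R_i(X_t,g_{t+1})\}^2$, so Assumption \ref{aWsdf} is exactly Assumption \ref{a-habit}(ii); and since $m(X_{t+s},W_{t+s+1}) = \exp(-\gamma g_{t+s+1}) R_i(X_{t+s},g_{t+s+1})$, the product $\prod_{s=0}^{\ell-1} m(X_{t+s},W_{t+s+1})$ appearing in Assumption \ref{aWcpt} equals $e^{-\gamma g_{t+1}}R_i(X_t,g_{t+1}) \cdots e^{-\gamma g_{t+\ell}}R_i(X_{t+\ell-1},g_{t+\ell})$, so Assumption \ref{aWcpt} is exactly Assumption \ref{a-habit}(iii).

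The one condition needing a short argument is the no-arbitrage Assumption \ref{noarb}. Here I would note that $\exp(-\gamma g_{t+s+1}) > 0$ always and, since $R_{i,t,t+1}$ is a gross return and hence non-negative, $R_i(X_t,g_{t+1}) = \mathbb E[R_{i,t,t+1}\,|\,X_t,g_{t+1}] \geq 0$; thus $m(X_t,W_{t+1}) \geq 0$. Combining this with Assumption \ref{a-habit}(iv), which says the $\ell$-fold product is positive a.e.-$[Q_\ell]$, forces each factor to be positive, so $m(X_t,W_{t+1}) > 0$ a.e.-$[\overline Q]$. By stationarity the $n$-fold product $\prod_{i=0}^{n-1} m(X_{t+i},W_{t+i+1})$ is then positive a.e.-$[\overline Q_n]$ for every $n \geq 1$, and Remark \ref{esp} yields Assumption \ref{noarb}. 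With all four assumptions verified, Theorem \ref{tW}(i) applies and gives the claim.

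I expect no genuine obstacle beyond Theorem \ref{tW} itself; the only steps requiring care are the bookkeeping that matches the integrability condition in Assumption \ref{a-habit}(iii) to the generic condition \ref{aWcpt} under the substitution for $m$, and the small argument upgrading the $\ell$-period positivity in Assumption \ref{a-habit}(iv) to single-period positivity of $m$, which is what feeds Remark \ref{esp}.
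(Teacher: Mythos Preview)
Your proposal is correct and follows the same route as the paper, which simply says the result is immediate from Theorem \ref{tW} together with Remark \ref{esp}. You have supplied the bookkeeping the paper leaves implicit, including the short argument that Assumption \ref{a-habit}(iv) together with non-negativity of each factor yields single-period positivity of $m$ and hence the hypothesis of Remark \ref{esp}.
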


\section{Identification in Banach lattices}\label{sgen}

Identification of positive eigenfunctions has been well studied in abstract functional analysis following \cite{KreinRutman}. However, the identification conditions in the mathematics literature are typically formulated in terms of properties of the resolvent of the operator, and are therefore difficult to motivate or interpret directly in an economic context. This section presents slightly stronger versions of these identification conditions which are arguably more applicable in economic contexts, and from which the identification theorems in the preceding sections are obtained. The results presented in this section are based on a version of the Kre\u{\i}n-Rutman theorem for Banach lattices due to \cite{Schaefer1960}. 

Let $E$ be a Banach lattice and $T : E \to E$ be a bounded linear operator. Examples of Banach lattices of functions include the $L^p$ spaces studied in Section \ref{sL2}, the Banach spaces $C_b(\mathcal X)$ of bounded continuous functions $f : \mathcal X \to \mathbb R$ on a (completely regular) Hausdorff space $\mathcal X$, and, if $\mathcal X$ is also compact, the space $C(\mathcal X)$ of continuous functions $f : \mathcal X \to \mathbb R$.\footnote{The terminology of a function space is used because the related identification problems in economics are typically in the context of operators on function spaces. More generally, there exist other Banach lattices that are not function spaces. The subsequent results apply equally to such spaces.}
Let $E_+$ and $E_{++}$ denote the positive cone in $E$ and its quasi-interior (see Appendix \ref{blapp}). Let $T^*$ denote the adjoint of $T$, $E^*$ denote the dual space of $E$, and $E^*_+$ and $E^*_{++}$ denote the positive cone in $E^*$ and its quasi-interior. The operator $T$ is said to be \emph{positive} if $T E_+ \subseteq E_+$ and \emph{irreducible} if $\sum_{n=1}^\infty \lambda^{-n} T^n f \in E_{++}$ for some $\lambda > r(T)$ and each $f \in (E_+ \setminus \{0\})$. 

\begin{assumption}\label{agenhl} $T : E \to E$ is a bounded linear operator such that:\\[-20pt]
\begin{enumerate}
\item $T$ is positive,
\item $r(T)$ is a pole of the resolvent of $T$, and 
\item $T$ is irreducible.
\end{enumerate}
\end{assumption}

\begin{theorem}[\cite{Schaefer1960}, Theorem 2]\label{tgenhl} Let Assumption \ref{agenhl} hold. Then:\\[-20pt]
\begin{enumerate}
\item there exist $\bar f \in E_{++}$ and $\bar f^* \in E_{++}^*$ such that $T \bar f = r(T) \bar f$ and $T^* \bar f^* = r(T) \bar f^*$
\item $r(T)$ is positive and simple.
\end{enumerate}
\end{theorem}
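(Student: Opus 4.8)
The plan is to reconstruct the Kre\u{\i}n--Rutman-type argument behind Theorem~2 of \cite{Schaefer1960} (which is what the statement quotes essentially verbatim). It turns on three mechanisms: (a) positivity of $T$ makes the resolvent negative outside the spectrum, $-R(T,\lambda)=(\lambda I-T)^{-1}=\sum_{n\ge 0}\lambda^{-n-1}T^{n}\ge 0$ for every real $\lambda>r(T)$; (b) a pole of $R(T,\cdot)$ at $r(T)$ of order $k$ has top Laurent coefficient $B_{k}=(T-r(T)I)^{k-1}P\ne 0$, with $P$ the associated spectral projection, whose range lies in $\ker(T-r(T)I)$; and (c) irreducibility, which as stated gives $\sum_{n\ge 1}\lambda^{-n}T^{n}f\in E_{++}$ for some $\lambda>r(T)$ and every $f\in E_{+}\setminus\{0\}$.

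First I would establish existence, quasi-interiority, and the adjoint eigenvector. Multiplying $-R(T,\lambda)\ge 0$ by $(\lambda-r(T))^{k}>0$ and letting $\lambda\downarrow r(T)$ yields $-B_{k}\ge 0$ because $E_{+}$ is closed; picking $x$ with $B_{k}x\ne 0$ (and replacing $x$ by $x^{+}$ or $x^{-}$) gives $\bar f:=-B_{k}x\in E_{+}\setminus\{0\}$ with $T\bar f=r(T)\bar f$. Were $r(T)=0$, this would force $\sum_{n\ge 1}\lambda^{-n}T^{n}\bar f=0$, contradicting (c) since the origin is not quasi-interior; so $r(T)>0$. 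Then $\sum_{n\ge 1}\lambda^{-n}T^{n}\bar f=\tfrac{r(T)}{\lambda-r(T)}\bar f\in E_{++}$ by (c), and as the scalar is positive, $\bar f\in E_{++}$. The same construction applied to $T^{*}$ (positive, with $r(T^{*})=r(T)$ and $r(T^{*})$ a pole of $R(T^{*},\cdot)=R(T,\cdot)^{*}$ of the same order) produces $\bar f^{*}\in E_{+}^{*}\setminus\{0\}$ with $T^{*}\bar f^{*}=r(T)\bar f^{*}$; pairing it with the irreducibility inclusion, $\langle\bar f^{*},\sum_{n\ge 1}\lambda^{-n}T^{n}x\rangle=\tfrac{r(T)}{\lambda-r(T)}\langle\bar f^{*},x\rangle$, and the left side is $>0$ for every $x\in E_{+}\setminus\{0\}$, so $\bar f^{*}$ is strictly positive, i.e.\ $\bar f^{*}\in E_{++}^{*}$; in particular $\langle\bar f^{*},\bar f\rangle>0$.

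Then I would prove simplicity. If $g\ne 0$ is a real eigenvector for $r(T)$, then $T|g|\ge|Tg|=r(T)|g|$, and pairing the nonnegative element $T|g|-r(T)|g|$ with the strictly positive $\bar f^{*}$ forces $T|g|=r(T)|g|$; hence $|g|$ and $g^{\pm}=\tfrac12(|g|\pm g)$ are nonnegative eigenvectors, each (by the previous step) either $0$ or quasi-interior, and since $g^{+}\wedge g^{-}=0$ while any element disjoint from a quasi-interior element must vanish, one of $g^{\pm}$ is $0$; so $g$ is, up to sign, quasi-interior. If $g_{1},g_{2}$ were linearly independent real eigenvectors, take both $\ge 0$; every $g_{1}-sg_{2}$ is then, up to sign, $\ge 0$, so $A:=\{s:g_{1}\ge sg_{2}\}$ is a closed down-set, bounded above (pair with $\bar f^{*}$), hence $A=(-\infty,s^{*}]$, and symmetrically $B:=\{s:g_{1}\le sg_{2}\}=[s_{*},\infty)$; since $A\cup B=\mathbb R$ we get $s_{*}\le s^{*}$, and any $s$ in the nonempty interval $[s_{*},s^{*}]$ gives $g_{1}=sg_{2}$, contradicting independence. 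Thus $\ker(T-r(T)I)$ is one-dimensional. Finally, if the generalized eigenspace had dimension $\ge 2$, one-dimensionality of the kernel would force a Jordan vector $v$ with $(T-r(T)I)v=c\bar f$, $c\ne 0$; but then $0=\langle(T^{*}-r(T)I)\bar f^{*},v\rangle=c\langle\bar f^{*},\bar f\rangle>0$, impossible. Hence $r(T)$ is simple (and the pole has order one).

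I expect the main obstacle to be the simplicity argument, specifically the step that every real eigenvector is, up to sign, quasi-interior: this is where the lattice structure (positive and negative parts, disjointness, and the behaviour of quasi-interior points) is genuinely used, beyond the resolvent-positivity trick. The other delicate point is pulling $\bar f$ out of the Laurent expansion in the first step, though that is the standard Kre\u{\i}n--Rutman move; everything else is bookkeeping with the Neumann series and the identifications $R(T^{*},z)=R(T,z)^{*}$ and $r(T^{*})=r(T)$.
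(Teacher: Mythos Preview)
The paper does not prove Theorem~\ref{tgenhl}: it is stated as a direct citation of Theorem~2 in \cite{Schaefer1960}, and the appendix contains no proof environment for it (the paper only proves the downstream Corollary~\ref{cgenhl} and Theorem~\ref{tgen}). So there is no ``paper's own proof'' to compare against; you have supplied a reconstruction of Schaefer's argument where the paper simply invokes the reference.

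Your reconstruction is correct and is indeed the standard Kre\u{\i}n--Rutman/Schaefer route: positivity of $-R(T,\lambda)$ for real $\lambda>r(T)$, extraction of a positive eigenvector from the leading Laurent coefficient at the pole, the irreducibility upgrade to quasi-interiority via $\sum_{n\ge 1}\lambda^{-n}T^{n}\bar f=\tfrac{r(T)}{\lambda-r(T)}\bar f$, the duality pairing to obtain strict positivity of $\bar f^{*}$, and the lattice argument ($|Tg|\le T|g|$, disjointness of $g^{\pm}$, sign-definiteness of real eigenvectors) for geometric simplicity followed by the Jordan-chain obstruction for algebraic simplicity. Two places that would benefit from one extra line each: (i) the claim ``any element disjoint from a quasi-interior element must vanish'' holds because the closed ideal generated by a quasi-interior point is all of $E$ and disjoint complements of dense ideals are $\{0\}$; (ii) in the last step, what you actually need (and what the paper's definition of algebraic multiplicity requires) is that the pole order $k$ equals $1$: if $k\ge 2$ then, since $(T-r(T)I)^{k-1}P$ has range contained in the one-dimensional kernel, there is $w$ with $(T-r(T)I)^{k-1}w=\bar f$, and $v:=(T-r(T)I)^{k-2}w$ furnishes your Jordan vector. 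With those two remarks your argument is complete.
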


\begin{corollary}\label{cgenhl}
Let Assumption \ref{agenhl} hold. Then $\bar f$ and $\bar f^*$ are the unique eigenfunctions of $T$ and $T^*$ belonging to $E_+$ and $E_+^*$.
\end{corollary}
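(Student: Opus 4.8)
The plan is to deduce Corollary~\ref{cgenhl} from the conclusions of Theorem~\ref{tgenhl} together with two routine facts about Banach lattices that are recorded in Appendix~\ref{blapp}. Writing $\langle x^*,x\rangle:=x^*(x)$ for the dual pairing, these facts are: (i) a quasi-interior point $x^*$ of $E_+^*$ is strictly positive, i.e.\ $\langle x^*,x\rangle>0$ for every $x\in E_+\setminus\{0\}$; and (ii) if $x^*\in E_+^*$ satisfies $\langle x^*,x_0\rangle=0$ for some quasi-interior point $x_0$ of $E_+$, then $x^*=0$, since the principal ideal generated by $x_0$ is dense in $E$ and $|\langle x^*,x\rangle|\le\langle x^*,|x|\rangle$ for $x^*\ge 0$.

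I would first handle $T$. Given $g\in E_+\setminus\{0\}$ with $Tg=\lambda g$, apply the functional $\bar f^*\in E_{++}^*$ supplied by Theorem~\ref{tgenhl} and use $T^*\bar f^*=r(T)\bar f^*$ to get
\[
 \lambda\langle\bar f^*,g\rangle=\langle\bar f^*,Tg\rangle=\langle T^*\bar f^*,g\rangle=r(T)\langle\bar f^*,g\rangle ;
\]
since $\langle\bar f^*,g\rangle>0$ by fact~(i), this forces $\lambda=r(T)$. Simplicity of $r(T)$ (Theorem~\ref{tgenhl}) means its geometric multiplicity is one, so $\ker(T-r(T)I)$ is spanned by $\bar f$, whence $g=c\bar f$; evaluating $\bar f^*$ once more gives $c=\langle\bar f^*,g\rangle/\langle\bar f^*,\bar f\rangle>0$, the denominator being positive again by fact~(i). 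So every non-negative eigenfunction of $T$ is a positive multiple of $\bar f$.

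For $T^*$ the argument is symmetric: if $g^*\in E_+^*\setminus\{0\}$ and $T^*g^*=\lambda g^*$, pair with $\bar f\in E_{++}$ and use $T\bar f=r(T)\bar f$ to obtain $\lambda\langle g^*,\bar f\rangle=r(T)\langle g^*,\bar f\rangle$, where $\langle g^*,\bar f\rangle>0$ by fact~(ii), so $\lambda=r(T)$. To finish as before I need $\ker(T^*-r(T)I)$ to be one-dimensional, and here the identity $R(T^*,z)=R(T,z)^*$ does the work: it shows that the Laurent coefficients of the resolvent of $T^*$ at $r(T)$ are the adjoints of those of $T$, so $r(T)$ is again a first-order pole of the resolvent of $T^*$, the associated Riesz projection of $T^*$ is the adjoint of the rank-one Riesz projection of $T$ at $r(T)$ and hence itself of rank one, and -- since for a first-order pole the range of the Riesz projection coincides with the eigenspace -- $\ker(T^*-r(T)I)$ is one-dimensional, hence spanned by $\bar f^*$. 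The sign argument of the previous paragraph then shows $g^*$ is a positive multiple of $\bar f^*$.

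Each pairing computation is two lines and facts~(i)--(ii) are standard, so the only step I expect to require care is the passage from simplicity of $r(T)$ for $T$ to one-dimensionality of $\ker(T^*-r(T)I)$; I plan to handle it via $R(T^*,z)=R(T,z)^*$ and the observations that taking adjoints preserves both the order of a pole of the resolvent and the rank of the (finite-rank) Riesz projection at an isolated point of the spectrum.
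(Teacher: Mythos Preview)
Your proof is correct and follows essentially the same route as the paper: both use the duality pairing $\langle \bar f^*, g\rangle$ (respectively $\langle g^*, \bar f\rangle$) together with $T^*\bar f^*=r(T)\bar f^*$ (respectively $T\bar f=r(T)\bar f$) to force the eigenvalue to equal $r(T)$, and then invoke simplicity to conclude uniqueness up to scale. The only difference is that where the paper dispatches simplicity of $r(T)$ for $T^*$ by citing Dunford--Schwartz (Exercise~VII.5.35) on preservation of multiplicities under adjoints, you spell out the mechanism via $R(T^*,z)=R(T,z)^*$ and the rank of the Riesz projection---this is the same fact, just unpacked.
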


Assumptions \ref{agenhl}(ii) and \ref{agenhl}(iii) are very high-level and may be difficult to motivate in an economic context. However it is possible to provide more tractable sufficient conditions for these assumptions. These sufficient conditions also yield a stronger result than Theorem \ref{agenhl} and Corollary \ref{cgenhl} with regards to the separation of $r(T)$ in the spectrum of $T$, which is useful for nonparametric estimation of the eigenfunction and its eigenvalue. 

\begin{assumption}\label{agen} $T : E \to E$ is a bounded linear operator such that:\\[-20pt]
\begin{enumerate}
\item $T$ is positive
\item for each $f \in (E_+ \setminus \{0\})$ there exists $n = n(f) \geq 1$ such that $T^n f \in E_{++}$
\item $T^n$ is compact for some $n \geq 1$, and
\item $r(T) > 0$. 
\end{enumerate}
\end{assumption}

\begin{theorem} \label{tgen}
Let Assumption \ref{agen} hold. Then:\\[-20pt]
\begin{enumerate}
\item there exist $\bar f \in E_{++}$ and $\bar f^* \in E_{++}^*$ such that $T \bar f = r(T) \bar f$ and $T^* \bar f^* = r(T) \bar f^*$
\item $\bar f$ and $\bar f^*$ are the unique eigenfunctions of $T$ and $T^*$ belonging to $E_+$ and $E_+^*$
\item $r(T)$ is simple, isolated, and is the unique eigenvalue of $T$ on the circle $\{z \in \mathbb C : |z| = r(T)\}$.
\end{enumerate}
\end{theorem}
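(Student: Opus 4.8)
The plan is to deduce Theorem \ref{tgen} from the Kre\u{\i}n--Rutman--Schaefer result stated in Theorem \ref{tgenhl} (and Corollary \ref{cgenhl}), by showing that Assumption \ref{agen} implies Assumption \ref{agenhl}, and then to upgrade the conclusion using power compactness and the Riesz--Schauder theory of compact operators. The key technical input is that, since $T^n$ is compact for some $n$, the nonzero spectrum of $T$ consists entirely of isolated eigenvalues of finite algebraic multiplicity, each of which is a pole of the resolvent $R(T,z)$; this is the standard spectral theory of power-compact (or Riesz) operators. Together with $r(T)>0$ from Assumption \ref{agen}(iv), this immediately gives Assumption \ref{agenhl}(ii): $r(T)$, being a nonzero point of $\sigma(T)$, is an isolated eigenvalue and hence a pole of the resolvent.

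Next I would verify Assumption \ref{agenhl}(iii), irreducibility, from Assumption \ref{agen}(i)(ii). Given $f \in E_+\setminus\{0\}$, pick $n=n(f)$ with $T^n f \in E_{++}$. Since $T$ is positive, $T^m f \in E_+$ for every $m$, so for any $\lambda > r(T) \geq 0$ the Neumann-type series $\sum_{m=1}^\infty \lambda^{-m} T^m f$ converges in $E$ (the spectral radius formula guarantees $\|T^m\|^{1/m}\to r(T)<\lambda$) to a limit in $E_+$ that dominates the single term $\lambda^{-n} T^n f \in E_{++}$; since $E_{++}$ is an order ideal-type set closed upward (any element of $E_+$ dominating a quasi-interior point is itself quasi-interior, by the characterization of quasi-interior points in Appendix \ref{blapp}), the sum lies in $E_{++}$. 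Hence $T$ is irreducible. With Assumption \ref{agenhl} verified, Theorem \ref{tgenhl} and Corollary \ref{cgenhl} yield parts (i) and (ii) of Theorem \ref{tgen} directly, as well as simplicity and positivity of $r(T)$.

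It remains to establish the sharper spectral statement in part (iii): that $r(T)$ is isolated and is the \emph{only} eigenvalue on the circle $|z| = r(T)$. Isolatedness is already in hand from the Riesz--Schauder theory invoked above (nonzero spectral points of a power-compact operator are isolated). For the peripheral-spectrum claim I would appeal to the structure theory of positive irreducible operators on Banach lattices: for an irreducible positive operator whose spectral radius is a pole of the resolvent, the peripheral point spectrum is either $\{r(T)\}$ or consists of $r(T)$ together with the other vertices of a regular polygon $\{r(T)\zeta : \zeta^k = 1\}$ for some $k \geq 1$ (this is Schaefer's companion result to Theorem \ref{tgenhl}; see Theorem V.5.4 and its surroundings in \cite{Schaefer1974}). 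I then rule out $k \geq 2$ using Assumption \ref{agen}(ii): if $r(T)e^{2\pi i/k}$ were an eigenvalue with eigenvector $g$, then $|g| \in E_+\setminus\{0\}$ and, by the irreducibility argument together with the fact that the modulus of an eigenfunction at a peripheral eigenvalue is itself the principal eigenfunction (up to scale), one derives a contradiction with the strict positivity forced by $T^{n(|g|)}|g| \in E_{++}$ unless the rotation is trivial.

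The main obstacle I anticipate is the last step — pinning down that the peripheral point spectrum is exactly $\{r(T)\}$ rather than a nontrivial cyclic set. The cleanest route is to quote Schaefer's cyclicity theorem for irreducible positive operators and then exploit eventual strong positivity (Assumption \ref{agen}(ii)) to collapse the cycle; care is needed because irreducibility alone permits a nontrivial cycle (e.g.\ a cyclic permutation matrix), and it is precisely the "some power is strictly positive" hypothesis — which a permutation lacks — that forces $k=1$. An alternative, possibly more self-contained, route is to apply the finite-dimensional Perron--Frobenius dichotomy to the spectral projection associated with the peripheral spectrum (finite-dimensional by compactness of $T^n$) after checking this projection is positive, but reconciling that with the lattice structure requires the same cyclicity input, so I would present the argument via Schaefer's theorem with eventual strong positivity as the extra ingredient.
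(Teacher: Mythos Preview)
Your plan matches the paper's proof almost exactly: verify Assumption~\ref{agenhl} from Assumption~\ref{agen}, invoke Theorem~\ref{tgenhl} and Corollary~\ref{cgenhl} for (i)--(ii), and use Riesz--Schauder theory for isolation of $r(T)$. Two small points of divergence are worth noting. First, when you assert that $r(T)$ is ``a nonzero point of $\sigma(T)$'' you are silently using that $r(T)\in\sigma(T)$ for a positive operator on a Banach lattice; the paper makes this explicit with a citation to \cite{Schaefer1999}, and you should too, since for a general operator only $|z|=r(T)$ for some $z\in\sigma(T)$ is guaranteed. Second, for irreducibility the paper uses the dual characterization (existence of $n$ with $\langle T^n f, f^*\rangle>0$ for all $f^*\in E_+^*$) and cites Proposition~III.8.3 of \cite{Schaefer1974}, whereas your Neumann-series domination argument is equally valid and slightly more self-contained.

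On the peripheral spectrum, the paper does not argue via the cyclicity theorem at all: it simply cites Proposition~V.5.6 of \cite{Schaefer1974}, which directly gives that an irreducible positive operator satisfying the eventual strong positivity condition in Assumption~\ref{agen}(ii) has $r(T)$ as the unique peripheral eigenvalue. Your sketch via the cyclic structure and ruling out $k\geq 2$ is the idea behind that proposition, but the contradiction step you outline (``the modulus of an eigenfunction at a peripheral eigenvalue is the principal eigenfunction, hence contradiction'') is incomplete as stated---the actual argument goes through a signum multiplication operator conjugating $T$ to a rotation of itself---so if you want a clean proof, just cite V.5.6 as the paper does rather than re-deriving it.
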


\section{Conclusion}

This paper provided conditions for nonparametric identification of positive eigenfunctions of operators in economics. The general theorems were applied to provide new identification conditions for external habit formation models and for positive eigenfunctions of pricing operators in dynamic asset pricing models. 

The identification conditions presented in this paper impose no smoothness or other shape restrictions on the eigenfunction beyond positivity. It remains to be seen to what extent the conditions may be relaxed when smoothness or additional shape restrictions are imposed. Further, the methods used in this paper might be extended to study nonparametric identification in other convex cones of functions, such as monotone, concave, or convex functions, which are of interest in economics.

\appendix
\setcounter{equation}{0}
\renewcommand{\theequation}{\Alph{section}.\arabic{equation}}

\section{Banach lattices and positive cones}  \label{blapp}

The following definitions are from \cite{Schaefer1999}. A vector space $E$ over the real field $\mathbb R$ endowed with an order relation $\leq$ is an ordered vector space if $f \leq g$ implies $f + h \leq g + h$ for all $f,g,h \in E$ and $f \leq g$ implies $\lambda f \leq \lambda g$ for all $f,g \in E$ and $\lambda \geq 0$. If, in addition, $\sup\{f,g\} \in E$ and $\inf\{f,g\} \in E$ for all $f,g \in E$ then $E$ is a vector lattice. If there exists a norm $\|\cdot\|$ on an ordered vector lattice $E$ under which $E$ is complete and $\|\cdot\|$ satisfies the lattice norm property, namely $f,g \in E$ and $|f| \leq |g|$ implies $\|f\| \leq \|g\|$, then $E$ is a Banach lattice. 

Let $E_+$ denote the positive cone of $E$ defined with respect to the order relation on $E$. If $E = L^p(\mathcal X, \mathscr X,\mu)$ with $1 \leq p \leq \infty$ then $E_+ = \{f \in E : f \geq 0$ a.e.-$[\mu]\}$. If $E = C_b(\mathcal X)$ or $C(\mathcal X)$ then $E_+ = \{f \in E : f(x) \geq 0$ for all $x \in \mathcal X\}$. An element $f \in E_+$ belongs to the quasi-interior $E_{++}$ of $E_+$ if $\{g \in E: 0 \leq g \leq f\}$ is a total subset of $E$. For example, if $E = L^p(\mathcal X, \mathscr X,\mu)$ with $1 \leq p < \infty$ and $f > 0$ a.e.-$[\mu]$ then $f \in E_{++}$. If $E = L^\infty(\mathcal X, \mathscr X,\mu)$ and $\mathrm{ess}\inf f > 0$ then $f \in E_{++}$. If $E = C_b(\mathcal X)$ or $C(\mathcal X)$ and $\inf_{x \in \mathcal X} f(x) > 0$ then $f \in E_{++}$.

Let $E^*$ denote the dual space of $E$ (the set of all bounded linear functionals on $E$). Given $f \in E$ and $f^* \in E^*$, let $\langle f,f^* \rangle$ denote the evaluation $f^*(f)$. The dual cone $E_+^*:= \{ f^* \in E^* : \langle f,f^* \rangle \geq 0$ whenever $f \in E_+ \}$ is the set of positive linear functionals on $E$. An element $f^* \in E^*_+$ is strictly positive if $f \in E_+$ and $f \neq 0$ implies $\langle f,f^* \rangle > 0$. The set of all strictly positive elements of $E_+^*$ is denoted $E_{++}^*$.

\section{Proofs } 

\begin{proof}[Proof of Theorem \ref{tL2}]
Immediate by application of Theorem \ref{tgen}.
\end{proof}

\begin{proof}[Proof of Corollary \ref{cL2}]
Let $\lambda \in \mathbb C$ denote the eigenvalue associated with $\bar f$. Assumption \ref{aL2}(ii) implies that $T^n \bar f = \lambda^n \bar f > 0$ a.e.-$[\mu]$ for some $n \geq 1$. Therefore, $|\lambda| > 0$ and so $r(T)> 0$, verifying Assumption \ref{aL2}(iv). The result follows by Theorem \ref{tL2}.
\end{proof}

\begin{proof}[Proof of Theorem \ref{tLinfty}]
Immediate by application of Theorem \ref{tgen}.
\end{proof}

\begin{proof}[Proof of Corollary \ref{cLinfty}]
Analogous to the proof of Corollary \ref{cL2}.
\end{proof}

\begin{proof}[Proof of Lemma \ref{suff conds}]
Take any $\psi \in \Psi$. Let $N_\psi = \{x \in \mathcal X : \psi(x) > 0\}$ denote the support of $\psi$. Assumption \ref{irred} shows that $\Pr(X_{t+n} \in N_\psi |X_t = x) > 0$ a.e.-$[Q]$ for some $n \geq 1$, and so $\mathbb E[\psi(X_{t+n})|X_t = x] > 0$ a.e.-$[Q]$. Assumption \ref{noarb} then implies that $T_n \psi(x) > 0$ a.e.-$[Q]$.
\end{proof}

\begin{proof}[Proof of Remark \ref{esp}]
To simplify notation, let $m_n =\prod_{i=0}^{n-1} m(X_{t+i},X_{t+1+i},Y_{t+1+i})$. Suppose the assertion is false. Then there exists $(\psi,n) \in \Psi \times \mathbb N$ such that $\mathbb E[\psi(X_{t+n})|X_t = x] > 0$ a.e.-$[Q]$ but for which $\mathbb E[m_n \psi(X_{t+n})|X_t = x] \leq 0$ for all $x \in \mathcal N$ where $\mathcal N \in \mathscr X$ and $Q(\mathcal N) > 0$. Let $1_{\mathcal N}(x)$ be the indicator function of the set $\mathcal N$. Then, by definition of $\mathcal N$ and the law of iterated expectations:
\begin{equation}\label{cont}
 0  \geq \mathbb E[1_{\mathcal N}(X_t)\mathbb E[m_n \psi(X_{t+n}) |X_t] ]  =\mathbb E[m_n 1_{\mathcal N}(X_t) \psi(X_{t+n})]\,.
\end{equation}
But $m_n > 0$ a.e.-$[\overline Q_n]$ by hypothesis. Moreover, and $1_{\mathcal N}(X_t) \psi(X_{t+n})$ is non-negative a.e.-$[\overline Q_n]$ and is positive on a set of positive $\overline Q_n$ measure (because $\mathbb E[\psi(X_{t+n})|X_t=x] > 0$ a.e.-$[Q]$ and $\psi \in \Psi$). Therefore, $m_n 1_{\mathcal N}(X_t) \psi(X_{t+n})$ is non-negative a.e.-$[\overline Q_n]$ and is positive on a set of positive $\overline Q_n$ measure. This implies $\mathbb E[m_n 1_{\mathcal N}(X_t) \psi(X_{t+n})] > 0$, which contradicts (\ref{cont}). Therefore, $\mathbb E[m_n \psi(X_{t+n})|X_t = x] > 0$ a.e.-$[Q]$, as required.
\end{proof}

\begin{proof}[Proof of Theorem \ref{id na disc}]
The result is proved by applying of Theorem \ref{tL2} to $T$. Assumptions \ref{lrr cpt}, \ref{irred} and \ref{noarb} satisfy Assumptions \ref{aL2}(i)(ii)(iii). Moreover, Assumption \ref{aL2}(iv) is satisfied because $r(T) \geq \rho$ (by definition of $r(T)$) and $\rho > 0$ because $T_n \phi = \rho^n \phi > 0$ a.e.-$[Q]$  for some $n \in \mathbb N$ by Lemma \ref{suff conds}. 
\end{proof}

\begin{proof}[Proof of Lemma \ref{nondeg proof}]
Let $1(\cdot) : \mathcal X \to \mathbb R$ denote the constant function, i.e. $1(x) = 1$ for all $x \in \mathcal X$. Under Assumption \ref{nondeg}, 
\begin{equation}
 T^n 1(x) = T_n 1(x) = (1+y_n(x))^{-n} \geq (1+C)^{-n} \times 1(x) 
\end{equation}
a.e.-$[Q]$. The result follows by applying Proposition 3 of \cite{Schaefer1960}.
\end{proof}

\begin{proof}[Proof of Theorem \ref{ex id na disc}]
Immediate by application of Theorem \ref{tL2}.
\end{proof}

\begin{proof}[Proof of Lemma \ref{noncompact}]
Let $\|\cdot\|$ denote the $L^2(\mathcal X,\mathscr X,Q)$ norm and let $\{\psi_i\}_{i \in \mathbb N}$ be a sequence of functions belonging to the ball $\{\psi \in L^2(\mathcal X,\mathscr X,Q) : \psi(x_t) = \psi(w_{t-1},\ldots,w_{t-\ell+1}), \|\psi\| = 1\}$ such that $\{\psi_i\}_{i \in \mathbb N}$ has no convergent subsequence (we can always choose such a sequence because this ball is not compact). Observe that $T \psi_i (x_t) = \psi(w_t,\ldots,w_{t-\ell+2})\mathbb E[m(X_t,X_{t+1},Y_{t+1})|X_t = x_t]$ for each $\psi_i$. The proof proceeds by contradiction: assume $T$ is compact. Then $\{T \psi_i\}_{i \in \mathbb N}$ has a convergent subsequence, say $\{T \psi_{i_k}\}_{k \in \mathbb N}$. Moreover,
\begin{eqnarray}
 \!\!\!\! & & \|T\psi_{i_k} - T \psi_{i_j}\|^2 \notag \\
 \!\!\!\! & &\quad = \quad \mathbb E\left\{ \left[ (\psi_{i_k}(W_t,\ldots,W_{t-\ell+2})-\psi_{i_j}(W_t,\ldots,W_{t-\ell+2}))\mathbb E[m(X_t,X_{t+1},Y_{t+1})|X_t] \right]^2 \right\} \\
 \!\!\!\! & & \quad \geq \quad c^2 \|\psi_{i_k} - \psi_{i_j}\|^2.
\end{eqnarray}
The subsequence $\{T \psi_{i_k}\}_{k \in \mathbb N}$ is convergent and therefore Cauchy, and so the subsequence $\{\psi_{i_k}\}_{k \in \mathbb N}$ is Cauchy by the above inequality. Therefore, $\{\psi_{i_k}\}_{k \in \mathbb N}$ must be convergent (by completeness of $L^2(\mathcal X,\mathscr X,Q)$). This contradicts the fact that $\{\psi_i\}_{i \in \mathbb N}$ has no convergent subsequence.
\end{proof}

\begin{proof}[Proof of Lemma \ref{lem-irred}]
Let $\mathcal X = \mathbb R^\ell$, $\mathscr X$ be the Borel $\sigma$-algebra on $\mathbb R^\ell$, and $Q$ be the stationary distribution of $\{X_t\}_{t=-\infty}^\infty$ on $\mathbb R^\ell$ (which exists by stationarity of $\{W_t\}_{t=-\infty}^\infty$). Let $S \in \mathscr X$ have positive Lebesgue measure (and therefore non-negative $Q$-measure). \cite{ChanTong} show that positivity of the density $f_U$ of $u$ implies $\mathbb P(X_{t+\ell} \in S | X_t = x) > 0$ for all $x \in \mathbb R^\ell$. Therefore, Assumption \ref{irred} is satisfied with $n = \ell$. 
\end{proof}

\begin{proof}[Proof of Theorem \ref{mktthmc}]
Apply Theorem \ref{id na disc} with $T_{\bar \tau}$ in place of $T$.
\end{proof}

\begin{proof}[Proof of Theorem \ref{t-habit}]
Immediate by application of Theorem \ref{tW} and Remark \ref{esp}.
\end{proof}

\begin{proof}[Proof of Corollary \ref{cgenhl}]
The eigenvalue $r(T)$ is simple by Theorem \ref{tgenhl}. Therefore any linearly independent eigenfunction must correspond to an eigenvalue different from $r(T)$. Suppose $\bar g \in (E_+ \setminus \{0\})$ is an eigenfunction of $T$ with eigenvalue $\lambda \neq r(T)$. Then, for $f^* \in E^*$, $f \in E$,
\begin{equation}
 \lambda \langle \bar g,\bar f^* \rangle = \langle T \bar g, \bar f^* \rangle = (\bar f^*\circ T)(\bar g) = T^*(\bar f^*)(\bar g) = r(T)\bar f^*(\bar g) = r(T) \langle \bar g,\bar f^* \rangle
\end{equation}
which is a contradiction, since $\bar f^* \in E^*_{++}$ and $g \in (E_+ \setminus \{0\})$ implies $\langle \bar g,\bar f^* \rangle > 0$. 

As the multiplicities of eigenvalues of finite multiplicity of $T$ are preserved under adjoints \cite[Exercise VII.5.35]{DunfordSchwartz}, $r(T)$ is also a simple eigenvalue of $T^*$. That $\bar f^* \in E_{++}^*$ is the unique eigenfunction of $T^*$ in $E_+^*$ follows by a similar argument.
\end{proof}

\begin{proof}[Proof of Theorem \ref{tgen}]
Theorem \ref{agenhl} and Corollary \ref{cgenhl} will be used to prove parts (i) and (ii). It is enough to show that $T$ satisfies the Assumptions of Theorem \ref{agenhl} and Corollary \ref{cgenhl}. Assumption \ref{agenhl}(ii) is satisfied under Assumptions \ref{agen}(i)(ii)(iv) because $r(T)\in \sigma(T)$ since $T$ is positive \cite[p. 312]{Schaefer1999} and $\sigma(T) \ni r(T) > 0$ and $T^n$ compact together imply $r(T)$ is a pole of the resolvent of $T$ \cite[Theorem 6, p. 579]{DunfordSchwartz}. Assumption \ref{agen}(ii) implies that for each $f \in (E_+ \setminus \{0\})$ and $f^* \in E_+^*$ there exists $n \geq 1$ such that $\langle T^n f,f^* \rangle > 0$, so $T$ is irreducible \cite[Proposition III.8.3]{Schaefer1974}. This proves parts (i) and (ii).

For part (iii), compactness of $T^n$ and the fact that $\sigma(T)^n = \sigma(T^n)$ implies that the only possible limit point of elements of $\sigma(T)$ is zero \cite[p. 579]{DunfordSchwartz}. Therefore $r(T)$ is an isolated point of $\sigma(T)$, and theorem \ref{agenhl} implies $r(T)$ is simple. Finally, that $r(T)$ is the unique eigenvalue belonging to the circle $\{z \in \mathbb C : |z| = r(T)\}$ follows under Assumption \ref{agen}(ii) by Proposition V.5.6 of \cite{Schaefer1974}.
\end{proof}

{
\begin{spacing}{1.0}
\bibliographystyle{chicago}
\bibliography{refsmain-id}
\end{spacing}
}

\end{document}